\newcommand{\defeq}{\coloneqq}
\newcommand{\ud}{\,\mathrm{d}}
\newcommand{\mathd}{\mathrm{d}}
\newcommand{\tmmathbf}[1]{\ensuremath{\boldsymbol{#1}}}
\def\NN{{\mathbb N}}        % naturals
\def\RR{{\mathbb R}}        % reals
\def\1{{\mathbf 1}}        % indicator
\newtheorem{assumption}{Assumption}
\DeclareMathOperator{\support}{support}
\newcommand\bracearraycond[1]{\left\{ \begin{array}{ll} #1 \end{array} \right.}
\newcommand\labelledmapsto[1]{\stackrel{#1}{\longmapsto}}
\newcommand\restr[2]{{% we make the whole thing an ordinary symbol
  \left.\kern-\nulldelimiterspace % automatically resize the bar with \right
  #1 % the function
  \vphantom{\big|} % pretend it's a little taller at normal size
  \right|_{#2} % this is the delimiter
}}
\newcolumntype{C}[1]{>{\centering\let\newline\\\arraybackslash\hspace{0pt}}m{#1}}
\let\rawpi\pi
\let\rawsigma\sigma
\let\rawphi\phi
\let\rawgamma\gamma
\let\rawtau\tau
\let\rawtheta\theta
\begin{document}

\title{Particle Gibbs Split-Merge Sampling for Bayesian Inference in Mixture Models}

\author{%
	\name Alexandre Bouchard-C\^ot\'e \email bouchard@stat.ubc.ca\\
	\addr Department of Statistics \\
	University of British Columbia \\
	Corresponding address: 3182 Earth Sciences Building, 2207 Main Mall, Vancouver, BC, Canada V6T 1Z4
	\AND
	\name Arnaud Doucet \email doucet@stats.ox.ac.uk\\
	\addr Department of Statistics \\
	University of Oxford, United Kingdom
	\AND
	\name Andrew Roth \email andrew.roth@ludwig.ox.ac.uk \\
	\addr Department of Statistics and Ludwig Institute for Cancer Research\\
	University of Oxford, United Kingdom
}

\editor{Zhihua Zhang}

\maketitle

\begin{abstract}%
This paper presents an original Markov chain Monte Carlo method to sample from the posterior distribution of conjugate mixture models.
This algorithm relies on a flexible split-merge procedure built using the particle Gibbs sampler introduced in \cite{andrieudoucet2009,andrieu2010}.
The resulting so-called Particle Gibbs Split-Merge sampler does not require the computation of a complex acceptance ratio and can be implemented using existing sequential Monte Carlo libraries.
We investigate its performance experimentally on synthetic problems as well as on geolocation data.
Our results show that for a given computational budget, the Particle Gibbs Split-Merge sampler empirically outperforms existing split merge methods. 
The code and instructions allowing to reproduce the experiments are available at \url{https://github.com/aroth85/pgsm}. 

\emph{Keywords}: Dirichlet process mixture models; Gibbs sampler; Particle Gibbs sampler; Sequential Monte Carlo.
\end{abstract}

\section{Introduction}

Mixture models are very commonly used to perform clustering and density estimation, and they have consequently found numerous applications in a wide range of scientific fields.
Since the introduction of Markov chain Monte Carlo (MCMC) methods in statistics over twenty five years ago, the Bayesian approach to mixture models has become very popular \citep{marin2005,richardsongreen1997}.
However, sampling from the posterior distribution of mixture models remains a challenging computational problem.

When conjugate priors are used, it is possible to analytically integrate out the mixing proportions and the parameters of the components.
This is the scenario we will focus on in this article.
In this case, we aim to sample from the posterior distribution of the latent  indicator variables associated with the observations, each latent variable indicating which component of the mixture generates a given data point.
A simple Gibbs sampler can be used which updates the latent indicator variables one-at-a-time \citep{mceachern1994,escobarWest1995} but this algorithm is inefficient when the number of observed data points $\T$ is large.
First, the simulated Markov chain would have to visit a long chain of lower probability configurations in order to split and merge large clusters.
As a result, it is prone to getting trapped in severe local modes.
Second, it is non-trivial to parallelize due to the inherently sequential nature of the updates.

The limitations of the simple Gibbs sampler has motivated a rich literature on MCMC algorithms for Bayesian  mixture models which partially address these issues; see, e.g., \cite{Ishwaran2001,Liang2007,walker2007sampling,kalli2011slice}. In particular, procedures proposing to split and merge existing clusters in one single step have
become prominent as they generally perform better than the simple Gibbs sampler \citep{richardsongreen1997,neal2000,dahl2005,jain2004}.

While designing an efficient merge proposal is simple, designing an efficient split proposal is a more complicated task.
When the mixing proportions and parameters are not integrated out, split-merge moves were first proposed in \cite{richardsongreen1997}.
The proposals were built to ensure the conservation of some moments and accepted/rejected using Metropolis-Hastings steps.
However, it is difficult to design efficient proposals in this context.

When the mixing proportions and parameters are integrated out, split-merge moves on the latent indicator variables were first proposed in \cite{jain2004}.
Assume one is interested in splitting a block/cluster of points $\b\subset\{1,\dots,\T\}$ into two blocks.
We select two points in $\b$, which will be in distinct blocks after the split.
There are  $2^{|\b|-2}$ possible ways to split the original block $\b$, hence any efficient proposal needs to be informed by the observations corresponding to the indices in $\b$.
In \cite{jain2004}, one selects two points at random which are used as anchors.
When the two anchors are in separate clusters, a merging of the two clusters is proposed.
When the two anchors are in the same cluster, a split is proposed as follows: first, the two anchor points seed a pair of new clusters, and second, several restricted Gibbs scans are performed to reallocate the remaining points originally clustered with the anchors to the two new clusters.
All clusters which do not contain the anchors are not altered, leading to a restricted Gibbs move.
After either a split or a merge is proposed, the Metropolis-Hastings ratio is computed to accept or reject the move.
The number of Gibbs scans in the split move is a free tuning parameter for this sampler. In \cite{dahl2005}, an alternative approach is proposed for split moves.
The restricted Gibbs scans are replaced by a sequential allocation step whereby the anchors define two new clusters and all points which were originally clustered with these points are sequentially allocated to one of the anchor clusters.

These split-merge algorithms have become popular as they provide state-of-the-art performance but they are relatively difficult to implement due to their complex Metropolis-Hastings acceptance ratios.

In the present work, we propose a novel split-merge sampler based on the conditional Sequential Monte Carlo (SMC) algorithm appearing in the Particle Gibbs (PG) sampler \citep{andrieudoucet2009,andrieu2010}, which we call the Particle Gibbs Split Merge (PGSM) sampler.
Most of the complexity inherent to split-merge operators is encapsulated into the well-understood PG sampling procedure \citep{ChopinSingh2012}, and no acceptance ratio needs to be computed.
Moreover, as the PGSM sampler relies on SMC methods, it benefits from advanced simulation methods from the SMC literature, such as adaptation schemes \citep{Lee2011} and methods for parallel and distributed inference \citep{Lee2010GraphicCards,Jun2012Entangled,Lee2014Forest}, as well as from efficient SMC software libraries \citep{Johansen2009smctc,Murray2013}.
The PGSM sampler does not make any topological assumption on the observation space in contrast to the posterior simulation techniques described in \cite{Dahl2003b} and \cite{Liang2007}.
This methodology complements the maximum \emph{a posteriori} inference techniques developed in \cite{Daume2007,Lianming2011}.

There has been previous work on applying sequential importance sampling and SMC methods for posterior simulation of Dirichlet processes and related mixture models.
However, to the best of our knowledge, SMC methods have never been previously used to design split-merge moves.
Indeed, the methods proposed in \cite{MacEACHERN1999,fearnhead2004dp,fearnhead2007dp2,Mansinghka2007,Caron2009Decomposable,carvalho2010} directly apply a single pass SMC algorithm to the entire clustering problem.
Empirical results in \cite{Kantas2015} suggest that such methods may require a number of particle which scales at least quadratically with respect to the number of datapoints.
The work of \cite{UlkerGC10} uses SMC within the context of the SMC Samplers methodology \citep{delmoral2006}, which makes it closer in spirit to existing MCMC methods.
Our contribution is to provide a principled approach for breaking down the clustering problem into smaller sub-problems more amenable to the use of SMC techniques.

Finally, other lines of work are devoted to parallelization and distribution of MCMC methods for mixture models \citep{chang13dpmm,williamson2013parallel,icml2014c2_gal14,ge2015distributed}.
As alluded to earlier, our method can potentially be parallelized and distributed using existing approaches from the SMC literature \citep{Lee2010GraphicCards,Jun2012Entangled,Lee2014Forest}. Like the other available split-merge procedures, it is also possible to consider different
split-merge moves simultaneously when the prior clustering distribution restricted to the clusters being updated does not depend on the number of clusters for the whole dataset. However, we do not focus on these aspects here.

The rest of this article is organized as follows.
Section~\ref{sec:Bayesianmixturemodels} introduces our notation for the types of Bayesian mixture models that we consider.
Section~\ref{sec:PGSMsampler} details the PGSM sampler. Section~\ref{sec:applications} applies the method to synthetic datasets, as well as real data from a geolocation application.
We conclude with some directions for future work and discussion in Section~\ref{sec:discussion}.

\section{Mixture models and Bayesian inference\label{sec:Bayesianmixturemodels}}\label{sec:mixture}

In this section we first layout notation and then describe Bayesian mixture models.
We focus on the case where the component base measure is conjugate to the data likelihood, so that the posterior distribution of any clustering can be evaluated analytically up 
to a normalizing constant.

\subsection{Notation}\label{sec:conventions}

We use bold letters for (random) vectors, and normal fonts for (random) scalars, sets, and matrices.
For quantities such as an individual observation $\y_i$, or a parameter $\theta$, which can be either scalars or vectors without affecting our methodology, we consider them as scalars without loss of generality.
Given a vector $\bold{x} = (x_1, x_2, \dots, x_n)$, and $i \le j$, we use $\bold{x}_{i:j}$ to denote the sub-vector $\bold{x}_{i:j} = (x_i, x_{i+1}, \dots, x_j)$.
To simplify notation, we do not distinguish random variables from their realization.
We define discrete probability distributions with their probability mass functions, and continuous probability distributions with their density functions with respect to the Lebesgue measure. 
A list of symbols is available in the Appendix.

\subsection{Bayesian mixture model}\label{sec:bayesian-mix-subsection}

Consider $\T$ observations $\yVec \defeq \left(\y_{1},\ldots,\y_{\T}\right)$.
A mixture model assumes that the observations indices $[\T] \defeq \{1, \ldots,\T\}$ are partitioned into subsets.
This partition is called a clustering, $\c \defeq \{\b_1,\ldots,\b_\nclust : \b_k \subseteq [\T]\}$ where $\nclust$ denotes the cardinality of the set $\c$ and each block $\b$ in the partition is referred to as a cluster.
Given the clustering $\c$, we define the following likelihood for the data

\begin{equation}\label{eq:likelihoodpartition}
\pi\left(\yVec \vert \c \right) \defeq \prod_{\b\in\c} \L(\yVec_\b),
\end{equation}
where $\L(\yVec_\b)$ is the likelihood of the observations in cluster $\b$

\begin{equation}\label{eq:conjugacy}
\L(\yVec_\b) \defeq\int \left( \prod_{i\in\b} \L(\y_\i \vert \theta) \right)  \H (\ud\theta ).
\end{equation}
In this expression, $\L(\y_\i \vert \theta)$ is a probability density function parametrized by $\theta$ and $\H(\ud\theta)$ a prior measure over this parameter.

The clustering $\c$ is unknown and is viewed as a random variable.
Let $\tau(\c)$ denote its prior probability, defined over the space of partitions of $[\T]$ and assumed to factorize as

\begin{equation}
\tau(\c) \propto \tauI(\nclust) \prod_{\b \in \c}
\tauII(|\b|) ,  \label{eq:priorpartition}
\end{equation}
where $\tauI:\NN\rightarrow \RR^{+}$ and $\tauII:\NN\rightarrow \RR^{+}$ are arbitrary functions.

This assumption on the prior clustering distribution is not restrictive and includes several popular priors, such as:

\begin{description}
	\item[Dirichlet process prior] \citep{Ferguson1973} with parameter $\concentration>0$: $\tauI(j) \propto\concentration^{j}$ and $\tauII(j) \propto (j-1)!$.
	\item[Pitman-Yor process prior] \citep{Pitman1997,Ishwaran2003} with parameters $\concentration, \discount$ ($\concentration>-\discount,$ $0\leq \discount < 1$): $\tauI(j) \propto \prod_{j'=1}^{j} \left\{\concentration+\discount \left(j'-1\right) \right\}$ and $\tauII(j) \propto\Gamma(j-\discount)$, where $\Gamma(\cdot)$ is the Gamma function.
	\item[Finite Dirichlet mixture] with parameter $\finiteconcentration>0$ and $\maxnclust$ components and symmetric concentration $\left(\finiteconcentration,\ldots,\finiteconcentration\right)$: $\tauI(j) \propto \1\left[j \le \maxnclust \right]$ and $\tauII(j) \propto \Gamma\left(j+\finiteconcentration\right)$.
\end{description}

The likelihood (\ref{eq:likelihoodpartition})-(\ref{eq:conjugacy}) and prior (\ref{eq:priorpartition}) define the following target posterior distribution

\begin{equation}\label{eq:target}
\pi(\c) \defeq \pi(\c\mid\yVec) \propto \tau(\c) \prod_{\b\in\c} \L(\yVec_\b).
\end{equation}
Since we view the observations as fixed, we drop the dependency on $\yVec$ from
the notation throughout the paper.
We detail in the following sections an original approach to sample from this posterior distribution.

\section{Methodology\label{sec:PGSMsampler}}

\begin{figure}[t]
	\begin{center}
		\includegraphics[scale=0.75]{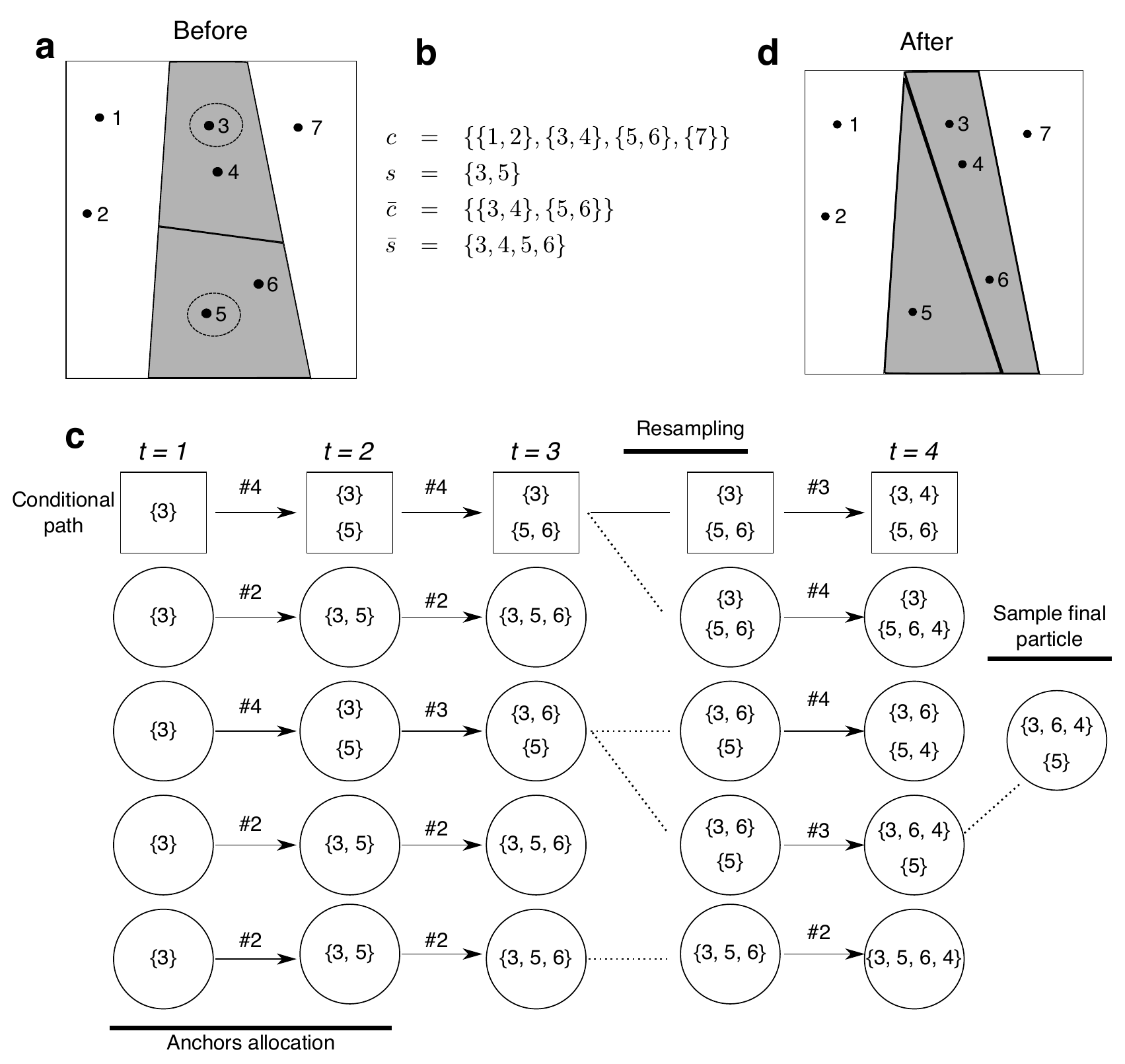}
	\end{center}
	\caption{
		(\textbf{a}) Example illustrating the setup of a split merge and (\textbf{b}) the notation used in Section~\ref{sec:decomposing}.
		The seven points denote the observation indices.
		While we show points in a two-dimensional space for visualization, the methodology does not rely on any topological properties of the observation space.
		The dashed circles denote the anchors $s$, and the shaded region, the closure of the anchors.
		(\textbf{c}) Example of a single PGSM\ step with $\boldsymbol{\rawsigma} = (3, 5, 6, 4)$. 
		Square boxes represent particles corresponding to the fixed conditional path.
		Circles represent regular particles.
		Each particle has an associated clustering denoted by the set or pair of sets written inside each particle.
		Arrows indicate proposal draws labelled by the proposed state (see Figure~\ref{fig:local-state-space}).
		Dashed lines indicate when resampling occurs.
		(\textbf{d}) Configuration after the PGSM update illustrated in \textbf{c}.
	}
	\label{fig:example}
\end{figure}

We organize the description of our method into two main parts.
First, we define a generic construction for decomposing the problem of sampling from the posterior (Equation~(\ref{eq:target})) with arbitrary numbers of clusters into \emph{split-merge} sub-problems.
Second, we show how the PG methodology can be used to address these sub-problems.

\subsection{Decomposing the clustering problem into split-merge subproblems}\label{sec:decomposing}

Algorithm~\ref{alg:setup_split_merge} allows us to break down the problem of sampling from the posterior into split-merge subproblems. We refer the reader to Figure~\ref{fig:example} for an illustrative example of the notation used throughout this description.

\begin{algorithm}
	\caption{}
	\begin{algorithmic}[1]
		\Function{SplitMerge}{$c, h(s), K_{c,s}(\bar c'|\bar c)$} % NB: macros dont work here
		
		\State $\s \sim \h(\cdot)$ \Comment{$\s=\{i_{1}, i_{2}\}$}
		
		\State $\cBar \gets \{\b\in\c : \b \cap \sBar \neq \emptyset\}$  \Comment{Clustering restricted to the anchors}
		
		\State $\sBar \gets \bigcup_{\b \in \cBar} \b$ \Comment{Closure of the anchors with respect to the clustering $\cBar$}
		
		\State $\cBar' \sim \K_{\c,\s}(\cdot|\cBar)$
		
		\State $\c' \gets \cBar' \cup (\c \backslash \cBar)$ \label{step:create-final-cluster}
		
		\State \textbf{return} $\c'$
		
		\EndFunction
	\end{algorithmic}
	\label{alg:setup_split_merge}
\end{algorithm}

The algorithm requires three inputs:
\begin{enumerate}
	\item $\c$: the current clustering,
	
	\item $\h(\s)$: a distribution for proposing an unordered pair of \emph{anchors} $\s=\{i_{1}, i_{2}\} \subset [\T]$,
	
	\item $\K_{c,s}(\cBar'|\cBar)$: a Markov transition kernel over the space of partitions of $\sBar$.
	This kernel is assumed to be invariant with respect to the following target distribution:
	\begin{eqnarray}
	\piBar_{c,s}(\cBar') &\propto& \tauIBar(|\cBar'|) \left( \prod_{\b\in\cBar'} \tauII(|\b|) \L(\yVec_\b) \1\left[b \cap \s \ne \emptyset \right] \right), \label{eq:piBar} \\
	\tauIBar(j) &\defeq & \tauI(j + \nclust - |\cBar|). \nonumber
	\end{eqnarray}
\end{enumerate}
In the following, we drop the subscripts from the kernel $\K$ and target $\piBar$ for simplicity.

The distribution $\piBar$ has a form similar to the posterior distribution defined in Equation~(\ref{eq:target}) with two modifications.
First, $\tauI(\nclust)$ is replaced by $\tauIBar(|\cBar'|) $.
Second, the support of the distribution is restricted so that each block in $\cBar'$ must contain at least one anchor point.
This also implicitly enforces the constraint that $|\cBar'| \le |s| = 2$.

Algorithm \ref{alg:setup_split_merge} returns an updated clustering where only the allocation of points in $\sBar$ have changed; i.e. the updated clustering $\c'$ only potentially differs from $\c$ at points which were initially clustered with the anchor points. 
The anchor proposal distribution $\h$  obviously impacts the performance of this procedure.
We empirically compare the performance of three anchor proposal distributions $\h$ in Section~\ref{sec:artificial}.

This scheme has the following property:

\begin{proposition}\label{prop:split-merge-correctness}
	If $\c \sim \pi$, where $\pi$ is given by Equation~(\ref{eq:target}), then the output of Algorithm~\ref{alg:setup_split_merge}, $\c'$, satisfies $\c' \sim \pi$.
	That is, the Markov kernel $\K(\c'|\c)$ induced by Algorithm~\ref{alg:setup_split_merge} is $\pi$-invariant.
\end{proposition}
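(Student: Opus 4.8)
The plan is to recognize Algorithm~\ref{alg:setup_split_merge} as a Gibbs-style update that resamples the restricted clustering $\cBar$ from a kernel invariant to the exact conditional law of $\pi$ given the blocks that do not touch the anchors. I would first condition on the sampled anchors $\s$ and prove $\pi$-invariance for each fixed $\s$. Since $\h$ does not depend on $\c$, averaging over $\s\sim\h$ is then immediate: a mixture $\sum_{\s}\h(\s)\K_\s$ of $\pi$-invariant kernels $\K_\s$ (where $\K_\s$ denotes the clustering-level kernel induced by the algorithm with anchors fixed at $\s$) is again $\pi$-invariant.

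The structural observation I would establish next is that, for fixed $\s$, the move leaves the outer blocks $\cMinus\defeq\c\setminus\cBar$ (and hence the closure $\sBar=\bigcup_{\b\in\cBar}\b$) unchanged. This holds because $\piBar$ is supported on partitions of $\sBar$ in which every block meets $\s$, so the blocks of $\c'$ that contain an anchor are exactly the blocks of $\cBar'$, whose union is again $\sBar$; recomputing the closure from $\c'$ returns the same $\sBar$ and the same $\cMinus$. This lets me partition the space of clusterings into equivalence classes indexed by $\cMinus$, within which $\c\leftrightarrow\cBar$ is a bijection and the move stays.

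The crux is the exact matching of $\piBar$ with the conditional law of $\pi$ on such a class. Writing $m\defeq|\cMinus|=\nclust-|\cBar|$, which is constant on the class, a short computation gives
\begin{equation*}
\pi(\cBar\cup\cMinus)\;\propto\;\tauI\!\left(|\cBar|+m\right)\Big(\prod_{\b\in\cBar}\tauII(|\b|)\L(\yVec_\b)\Big)\Big(\prod_{\b\in\cMinus}\tauII(|\b|)\L(\yVec_\b)\Big),
\end{equation*}
where the last factor depends only on $\cMinus$. Since $\tauIBar(j)=\tauI(j+m)$ by definition, the first two factors are exactly $\piBar(\cBar)$ up to a constant, so $\pi(\cBar\cup\cMinus)=Z_{\cMinus}\,\piBar(\cBar)$ for a class-dependent constant $Z_{\cMinus}$. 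The thing to check carefully here is that the shift defining $\tauIBar$ reproduces $\tauI(\nclust)$ and that the indicator $\1[\b\cap\s\ne\emptyset]$ carves out precisely the admissible restricted partitions, so that $\piBar$ coincides (up to the class constant) with the genuine $\pi$-conditional rather than merely matching term by term.

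Finally I would assemble the invariance. Before summing, I would note that $\piBar_{\c,\s}$ depends on $\c$ only through $\cMinus$ (via $m$ in $\tauIBar$ and via the domain $\sBar$), so on a fixed class it is a single distribution and the invariant kernel $\K_{\c,\s}$ may be taken common to the class. Fixing $\s$ and a target $\c'=\cBar'\cup\cMinus$, only $\c$ in the same class contribute, so
\begin{equation*}
\sum_{\c}\pi(\c)\,\K_\s(\c'|\c)=\sum_{\cBar}Z_{\cMinus}\,\piBar(\cBar)\,\K_{\c,\s}(\cBar'|\cBar)=Z_{\cMinus}\,\piBar(\cBar')=\pi(\c'),
\end{equation*}
using the assumed $\piBar$-invariance of $\K$ in the middle step, and averaging over $\s\sim\h$ concludes. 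I expect the closure-consistency step and the $\tauIBar$ bookkeeping to be the only genuine obstacles; the remainder follows the standard Gibbs conditional-invariance template.
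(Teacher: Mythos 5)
Your proposal is correct and is essentially the paper's own argument in different packaging: your class-preservation observation is exactly the paper's Lemma~\ref{lemma:main}, your identity $\pi(\cBar\cup\cMinus)=Z_{\cMinus}\,\piBar(\cBar)$ obtained via $\tauIBar(j)=\tauI(j+m)$ is the computation in Equations~(\ref{eq:equivalence-last})--(\ref{eq:last}), and your direct stationarity sum over the class indexed by $(\s,\cMinus)$ replays the paper's auxiliary-variable construction $\piTilde(\s,\cMinus,\c)=\pi(\c)\,\h(\s)\,\1[\cMinus=\cMinus(\s,\c)]$ with its three-kernel Gibbs composition. Both arguments also rest on the same point you rightly make explicit, namely that the inner kernel $\K_{\c,\s}$ depends on $\c$ only through the class (through $\cMinus$ and $\sBar$), which holds for the particle Gibbs kernel the paper actually uses.
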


\subsection{Overview of the particle Gibbs algorithm}\label{sec:pgsm-overview}

Ideally, we would like to sample independently from $\piBar$ in Algorithm  \ref{alg:setup_split_merge}, that is we would like to have $K(\cBar'|\cBar)=\piBar(\cBar')$, but this is too computationally expensive if $|\sBar|$ is large. Our primary contribution is an original way to address this issue using SMC-based methods.

In principle, it would be possible to use SMC methods to obtain a sample approximately distributed according to $\piBar$ \citep{fearnhead2004dp}. However, if we were to use this sample within Algorithm~\ref{alg:setup_split_merge}, the resulting invariant distribution would not be $\pi$. For this reason, we consider Particle MCMC\ (PMCMC) methods \citep{andrieudoucet2009,andrieu2010}.

PMCMC methods allow us to use SMC\ ideas in a principled way within MCMC\ schemes.
We will focus here on the PG sampler and show how one can use this methodology to obtain an efficient MCMC\ kernel  $\K$ targeting the distribution $\piBar$ given in Equation~(\ref{eq:piBar}). The outcome of the PG sampling steps will be either to cluster all the points in $\sBar$ into one block or to break $\sBar$ into two clusters, with the restriction that each of the two blocks should contain one anchor.
Interestingly, the form of the PG\ algorithm is the same no matter if the two anchors were previously together or apart before its execution.
This contrasts with previous split-merge algorithms such as \cite{jain2004}, which require a different treatment for split and merge moves.

To sample from $\piBar$, PG breaks the sampling of $\cBar'$ into a sequence of $\n \defeq |\sBar|$ simpler sampling problems.
In this scenario, contrary to most applications of PG, there is no intrinsic time ordering of the observations.
We randomize the order in which the points are included by introducing, conditionally on $\s$ and $\sBar$, a random permutation $\sigmaVec \defeq (\sigma_1, \dots, \sigma_\n)$.
This permutation is sampled using Algorithm~\ref{alg:sample_permmutation}.
\begin{algorithm}
	\caption{•}
	\begin{algorithmic}[1]
		\Function{SamplePermutation}{$\s, \sBar$}
		
		\State $\sigma_1 \sim \mbox{Uniform}(\s)$
		
		\State $\sigma_2 \gets \s\backslash\{\sigma_1\}$
		
		\State $(\sigma_3, \sigma_4, \dots, \sigma_\n) \gets \mbox{UniformPermutation}(\sBar \backslash \s)$
		
		\State \textbf{return} $\sigmaVec$
		
		\EndFunction
	\end{algorithmic}
	\label{alg:sample_permmutation}
\end{algorithm}

In other words, $\sigmaVec$ is uniform over the permutations of the observation indices in $\sBar$ such that the members of $\s$ appear in the first two entries.
The variable $\sigma_\t$ specifies the index of the observation $\y_{\sigma_\t}$  introduced into the PG algorithm at SMC\ iteration (``algorithmic'' time) $\t$ and $\x_\t$ is the corresponding allocation decision.
A particle $\xVec_{\t}$ is defined as a sequence of allocation decisions, $\xVec_{\t} \defeq (\x_1, \dots, \x_\t)$, where $\x_\t \in \X$, $\t \in \{1, \dots, \n\}$.

Given $\sigmaVec$, we denote the SMC\ proposals used within PG by $\q^\sigmaVec_\t(\x_\t|\xVec_{\t-1})$, and the intermediate unnormalized target distributions, by $\gamma^\sigmaVec_\t(\xVec_{\t})$.
We remind the reader that both $\gamma^\sigmaVec_\t$ and $\q^\sigmaVec_\t$ are allowed to depend on arbitrary subsets of the observations $\yVec$; see, e.g., \citep{delmoral2006}. 
However, we omit this dependency for notational simplicity.
Our methodology is flexible with respect to the choice of the proposals and the choice of the intermediate unnormalized target distributions.
For our methodology to provide consistent estimates, only the following weak assumptions have to be satisfied.

\begin{assumption}\label{assumption:support}
	For all $\t \in \{1, \dots, \n\}$, we assume $\support(\gamma^\sigmaVec_\t)  \subseteq  \support(\q^\sigmaVec_{\t})$ where  $\q^\sigmaVec_{\t}(\xVec_{\t}) \defeq \q^\sigmaVec_1(\x_1) \prod_{\k= 2}^{\t} \q^\sigmaVec_{\k}(\x_{\k} | \xVec_{\k-1})$ for $\t\geq 2$.
\end{assumption}

\begin{assumption}\label{assumption:bijection}
	We assume that there exists a bijection $\phi^\sigmaVec$ taking a particle as input, and outputting a clustering of $\sBar$.
	More precisely, $\phi^\sigmaVec$ is a bijection between the support of the proposal, and the support of the split-merge target distribution, $\phi^\sigmaVec : \support(\q^\sigmaVec_{\n}) \to \support(\piBar)$.
\end{assumption}

\begin{assumption}\label{assumption:intermediate-final}
	We assume that $\gamma^\sigmaVec_\n(\xVec_{\n}) \propto \piBar(\phi^\sigmaVec(\xVec_{\n}))$.
\end{assumption}

Assumption~\ref{assumption:support} ensures that all the importance weights appearing in the SMC method are well-defined.
Assumption~\ref{assumption:bijection} is a simple condition ensuring that we can consistently relabel the particles. 
Assumption~\ref{assumption:intermediate-final} ensures that we target the desired distribution at algorithmic time $n$.
Note that Assumption~\ref{assumption:intermediate-final} only restricts the choice of $\gamma_\t$ for the final SMC\ iteration, $\t = \n$. We use this flexibility in Section~\ref{sec:improved}.
We show in the next section how to design $\q^\sigmaVec$, $\gamma^\sigmaVec$, $\phi^\sigmaVec$ and $\X$ that satisfy these assumptions.

PG proceeds in a way similar to standard SMC\ algorithms, with the important difference that one of the $\N$ particle paths is fixed.
In our setup, this path is obtained using the inverse of the bijection described in Assumption~\ref{assumption:bijection}, applied to the state of the restricted clustering $\cBar$ prior to the current PG step.
As discussed in \cite{ChopinSingh2012}, we can without loss of generality set the genealogy of the conditioning path $\cBar$ to $(1,...,1)$, i.e. we use the particle index $\p = 1$ for this conditioning path: $\xVec_{\n}^1 \defeq \left(\phi^\sigmaVec\right)^{-1}(\cBar)$.
This defines a path by taking a prefix of length $\t$ of the vector $\xVec_{\n}^1$ for $\xVec_{\t}^1$, i.e. $\xVec_{\t}^1 = \left( \xVec_{\n}^1 \right)_{1:\t}$.

The final ingredient required to describe the PG algorithm is a \emph{conditional resampling} distribution $\r(\ancestors\mid\wVec)$, where $\ancestors \defeq (\ancestor_2, \dots, \ancestor_N)$ denotes the resampling ancestors, $\ancestor_\p \in \{1, \dots, \N\}$, and $\wVec \defeq (\w^1, \dots, \w^\N)$ denotes a vector of probabilities.
We limit ourselves to  multinomial resampling:
\begin{align}\label{eq:resampling}
\r(\ancestors\mid\wVec) = \prod_{\p=2}^\N r(\ancestor_{\p} \mid\wVec) &= \prod_{\p=2}^\N  \w^{\ancestor_\p}.
\end{align}%
More elaborate schemes can be used, see \cite{andrieudoucet2009,andrieu2010}. Instead of resampling at each time step as in vanilla SMC\ algorithms, we only resample when the relative Effective Sampling Size (ESS) criterion, which takes values between $0$ and $1$, is below a pre-specified threshold $\ressThreshold$, $\ressThreshold \in [0,1]$. The adaptive resampling procedure was proposed by \cite{Liu1995AdaptResampling} for standard particle methods and the correctness of this procedure for PG has been established in \cite{Lee2011}. The resulting procedure is described in Algorithm~\ref{alg:pgsm}.

\begin{algorithm}[h]
	\caption{•}
	\begin{algorithmic}[1]
		\Function{ParticleGibbsSplitMerge}{$\s, \sBar, \cBar$, $\piBar$} \Comment{Inputs coming from Algorithm~\ref{alg:setup_split_merge}}
		
		\State $\sigmaVec \gets \Call{SamplePermutation}{\s, \sBar}$
		
		\State $\xVec_{\n}^1 \gets \left(\phi^\sigmaVec\right)^{-1}(\cBar)$ \Comment{Compute the conditional path}
		
		\For{$\t \in \{1, \dots, \n-1\}$}
		\State $\xVec_{\t}^1 \gets \left( \xVec_{\n}^1 \right)_{1:\t}$ \Comment{First particle of each generation matches the conditional path}
		\EndFor
		
		\For{$p \in \{2, \dots, N\}$} \Comment{Initialize particles}
		\State $\x_1^\p \sim \q_1^\sigmaVec\left(\cdot\right)$
		
		\State $\xVec_1^\p \gets (\x_1^\p)$
		
		\EndFor
		
		\For{$p \in \{1, \dots, N\}$} \Comment{Initialize incremental importance weights}
		\State $\tilde \w_1^\p \gets \frac{\gamma^\sigmaVec_1(\xVec_1^\p)}{\q^\sigmaVec_1(\xVec_1^\p)}$
		
		\EndFor
		
		\For{$p \in \{1, \dots, N\}$}
		\State $\w_1^\p \gets \frac{\tilde \w_1^\p}{({\sum_{\p'=1}^\N \tilde \w_1^{\p'}})}$ \Comment{Compute normalized weights}  
		
		\EndFor
		
		\For{$t \in \{2, \dots, \n\}$}
		
		\If{$(\N \sum_{\p=1}^{\N} (\w_{t-1}^p)^2)^{-1} < \ressThreshold$} \Comment{Resample only if relative ESS is too low}
		
		\State $\ancestors \sim \r(\cdot\mid\wVec_{\t-1})$ \Comment{Perform the conditional resampling step} \label{alg:line:resampling}
		
		\State $\tilde \wVec_{\t-1} \gets (1, 1, 1, \dots, 1)$ \Comment{Reset the weights}
		
		\Else
		
		\State $\ancestors \gets (1, 2, 3, \dots, \N)$ \Comment{Resampling  is skipped: set $\ancestors$ to the identity map}

		\EndIf
		
		\For{$p \in \{2, \dots, N\}$}
		\State $\x_\t^\p \sim \q_\t^\sigmaVec\left(\cdot \mid \xVec_{\t-1}^{\ancestor_\p}\right)$ \Comment{Propose new block allocation for $\y_{\sigma_t}$}
		
		\State $\xVec_\t^\p \gets (\xVec_{\t-1}^{\ancestor_\p}, \x_\t^\p)$ \Comment{Concatenate new block allocation to path}
		
		\EndFor
		
		\For{$p \in \{1, \dots, N\}$}
		\State $\tilde \w_\t^\p \gets \tilde \w_{\t-1}^\p \cdot \w(\xVec_{\t-1}^{\ancestor_\p}, \x_\t^\p)$ \Comment{Update weights (see Equation~(\ref{eq:particle-weights}))} \label{alg:line:incremental_weight}
		
		\EndFor
		
		\For{$p \in \{1, \dots, N\}$}
		\State $\w_\t^\p \gets \frac{\tilde \w_\t^\p}{({\sum_{\p'=1}^\N \tilde \w_\t^{\p'}})}$ \Comment{Compute normalized weights}  \label{alg:line:weight_norm}
		
		\EndFor

		\EndFor \label{step:end-of-aux-vars}
		
		\State  $\xVec'_\n \sim \sum_{\p = 1}^\N \w_\n^\p \delta_{\xVec_\n^\p}(\cdot)$ \Comment{Sample particle representing new state} \label{alg:line:sample_particlepath}
		
		\State $\cBar' \gets \phi^\sigmaVec(\xVec'_\n)$ \Comment{Compute updated partition} \label{step:update-part}
		
		\State \textbf{return} $\cBar'$
		
		\EndFunction
	\end{algorithmic}
	\label{alg:pgsm}
\end{algorithm}

Most of Algorithm~\ref{alg:pgsm} is concerned with the creation of temporary auxiliary variables (lines 1--\ref{step:end-of-aux-vars}).
These auxiliary variables can all be discarded after the algorithm returns $\cBar'$, as they can be resampled from scratch every time Algorithm~\ref{alg:pgsm} is run. 
The part of the algorithm that performs the actual split or merge is in lines \ref{alg:line:sample_particlepath} and \ref{step:update-part}.
At this point of the execution of the algorithm, the particle population at SMC\ generation $\n$ can be interpreted (via $\phi^\sigmaVec$) as a distribution over clusterings of $\sBar$, with some particles 
corresponding to merging all points in $\sBar$ into one block (i.e. when $|\phi^\sigmaVec(\xVec'_\n)| = 1$), and others, to various way of splitting $\sBar$ into two blocks (when $|\phi^\sigmaVec(\xVec'_\n)| = 2$).

Correctness of this procedure follows straightforwardly from the original PG argument (see Appendix~\ref{appendix:pg-correctness} for details):

\begin{proposition}\label{prop:pg-correctness}
	Under Assumptions~\ref{assumption:support}, \ref{assumption:bijection}, and \ref{assumption:intermediate-final}, the output of Algorithm~\ref{alg:pgsm}, $\cBar'$, satisfies $\cBar' \sim \piBar$  if $\cBar \sim \piBar$, for any $\N\geq2$, i.e. the Markov kernel $\KBar(\cBar'|\cBar)$ induced by Algorithm~\ref{alg:pgsm} is $\piBar$-invariant.
\end{proposition}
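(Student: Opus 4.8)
The plan is to reduce the claim to the standard Particle Gibbs invariance argument of \cite{andrieu2010}, with Assumptions~\ref{assumption:support}--\ref{assumption:intermediate-final} supplying exactly the three ingredients that argument requires. Throughout I condition on the permutation $\sigmaVec$ drawn in Algorithm~\ref{alg:sample_permmutation}: since Algorithm~\ref{alg:pgsm} re-samples $\sigmaVec$ independently at each invocation and the bijection $\phi^\sigmaVec$, the proposals $\q^\sigmaVec$, and the targets $\gamma^\sigmaVec$ are all indexed by it, it suffices to prove $\piBar$-invariance for each fixed $\sigmaVec$; the mixture over $\sigmaVec$ then inherits invariance. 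For fixed $\sigmaVec$, Assumption~\ref{assumption:bijection} lets me identify the final-generation particle space with $\support(\piBar)$, so that I may work interchangeably with particles $\xVec_\n$ and clusterings $\cBar = \phi^\sigmaVec(\xVec_\n)$, and Assumption~\ref{assumption:support} guarantees that every importance weight in Algorithm~\ref{alg:pgsm} is well defined.

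First I would introduce the extended target distribution $\piTilde$ of \cite{andrieu2010}, defined on the space of all particles $\{\xVec_\t^\p\}$, all resampling ancestors $\ancestors$, and a distinguished output index $\p^\star \in \{1,\dots,\N\}$ that is selected in line~\ref{alg:line:sample_particlepath}. This $\piTilde$ is built from the unconditional SMC sampling density by reweighting with $\gamma^\sigmaVec_\n$ along the genealogy of particle $\p^\star$ and dividing by the corresponding proposal and resampling factors. The two facts I need to record are: (i) the marginal of $\piTilde$ in the clustering $\phi^\sigmaVec(\xVec_\n^{\p^\star})$ determined by the selected particle is exactly $\piBar$ --- here Assumption~\ref{assumption:intermediate-final}, $\gamma^\sigmaVec_\n \propto \piBar \circ \phi^\sigmaVec$, is what collapses the telescoping proposal and weight product down to $\piBar$; and (ii) the conditional law under $\piTilde$ of everything \emph{except} the genealogy of particle $\p^\star$ and the index $\p^\star$ itself coincides with the conditional-SMC generative process.

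Next I would show that Algorithm~\ref{alg:pgsm} is precisely a two-block Gibbs sweep for $\piTilde$. Using the relabelling of \cite{ChopinSingh2012} that fixes the conditioning genealogy to $(1,\dots,1)$ via $\xVec_\n^1 = (\phi^\sigmaVec)^{-1}(\cBar)$, the steps that propose $\x_\t^\p$ from $\q_\t^\sigmaVec$ for $\p \in \{2,\dots,\N\}$ and draw the ancestors $\ancestors$ from $\r$ sample all non-conditioning particles and ancestors from their full conditional under $\piTilde$, while line~\ref{alg:line:sample_particlepath}, which draws $\xVec_\n'$ with probability $\w_\n^\p$, samples the new index $\p^\star$ from its full conditional (proportional to the terminal normalized weights). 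Each of these is a valid conditional update, so the composition leaves $\piTilde$ invariant; combined with marginal property (i), the induced kernel $\KBar$ is $\piBar$-invariant for every $\N \ge 2$ (the lower bound simply ensures at least one free particle besides the conditioning path). The hypothesis $\cBar \sim \piBar$ enters through property (ii): it makes the fixed input path a genuine draw from the relevant $\piTilde$-conditional, so that after the sweep the output is again $\piBar$-distributed.

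The main obstacle I anticipate is the adaptive, ESS-triggered resampling in Algorithm~\ref{alg:pgsm}, since the textbook construction assumes resampling at every step. Because the resampling decision $(\N \sum_\p (\w_{\t-1}^\p)^2)^{-1} < \ressThreshold$ is a deterministic function of the current normalized weights --- and hence of the particles already generated --- the set of resampling times is measurable with respect to the conditioned-upon history and can be folded into the extended target without altering either marginal property (i) or the Gibbs-sweep interpretation. I would make this rigorous by invoking the correctness of adaptive resampling within PG established in \cite{Lee2011}, treating the skipped-resampling branch ($\ancestors = (1,\dots,\N)$ with retained weights) as the degenerate conditional-SMC step that it formally is. The remaining work is bookkeeping: tracking the genealogy of $\p^\star$ through the mixed resample/no-resample steps and checking the telescoping cancellation in property (i), both of which are routine once the extended target has been written down.
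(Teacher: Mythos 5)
Your proposal is correct and takes essentially the same route as the paper: the paper's proof likewise treats the permutation $\sigmaVec$ as an auxiliary variable drawn independently of $\cBar$, identifies Algorithm~\ref{alg:pgsm} (after the relabelling of \cite{ChopinSingh2012} that fixes the conditioning path's genealogy) as a standard Particle Gibbs kernel, and concludes $\piBar$-invariance from Assumptions~\ref{assumption:support}--\ref{assumption:intermediate-final} --- the only difference being that the paper invokes Theorem~5(a) of \cite{andrieu2010} as a black box where you inline its extended-target/Gibbs-sweep proof. Your explicit handling of the ESS-triggered resampling via \cite{Lee2011} is also what the paper relies on, stated in Section~\ref{sec:pgsm-overview} rather than inside the proof.
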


\subsection{Intermediate target distributions and proposals construction}\label{sec:intermediate}

We detail here the construction of a set of proposal distributions $\q^\sigmaVec$, unnormalized target distributions $\gamma^\sigmaVec$, and mappings $\phi^\sigmaVec$ satisfying Assumptions~\ref{assumption:support} to \ref{assumption:intermediate-final}.
We denote the space of possible allocation decisions at a given PG\ iteration by $\X$. Our construction is based on an encoding where the space $\X$ consists in the rectangles shown in Figure~\ref{fig:local-state-space}.
We call the rectangles \emph{states} for short.
These states are used to build particles: recall that a particle $\xVec_\t$ is defined as a list of local decisions, $\xVec_{\t} \defeq (\x_1, \dots, \x_\t)$, $\x_{\t'} \in \X$.

\begin{figure}[t]
	\begin{center}
		\begin{tabular}{cc}
			\includegraphics[width=3.5in]{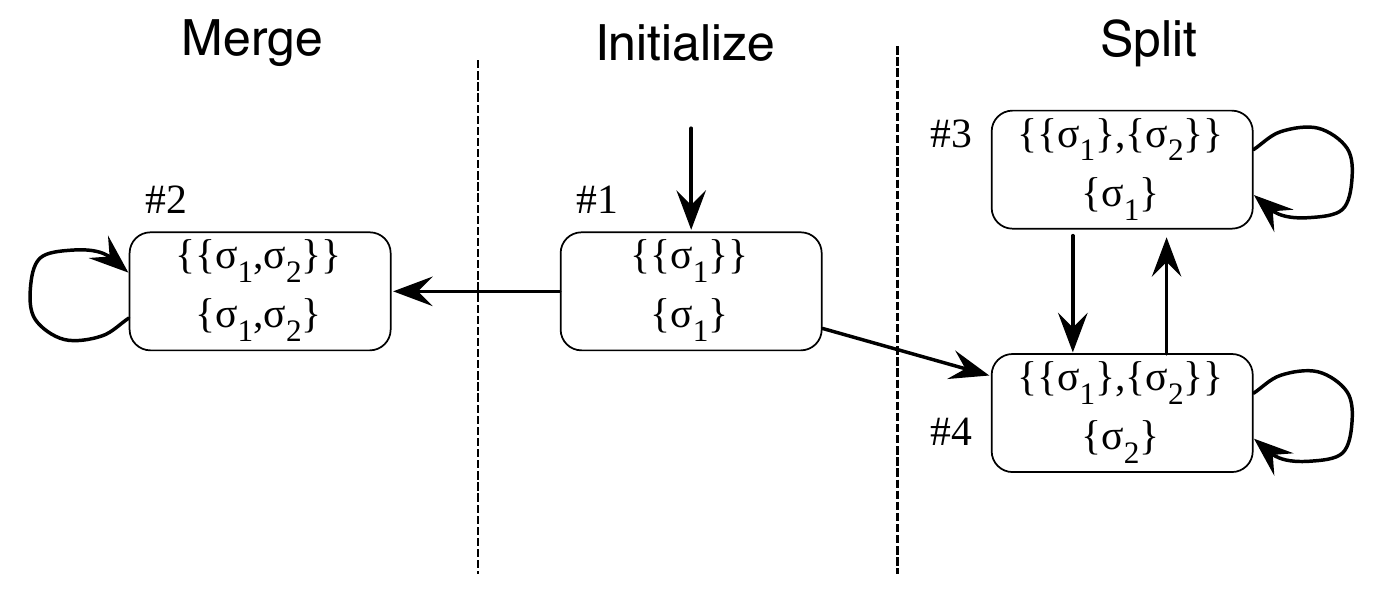} &
			\begin{minipage}{1.7in}
				\vspace{-1in}
				\begin{eqnarray*}
					\partSupport(\#1) &\defeq& \{\#2, \#4\} \\
					\partSupport(\#2) &\defeq& \{\#2\} \\
					\partSupport(\#3) &\defeq& \{\#3, \#4\}  \\
					\partSupport(\#4) &\defeq& \{\#3, \#4\}
				\end{eqnarray*}
				\\
				\\
			\end{minipage}
		\end{tabular}
		\caption{
			Left: State space $\X$ and allowed transitions $\partSupport(\cdot)$ for the local allocation decisions.
			Right: allowed transitions between the states.
		}
		\label{fig:local-state-space}
	\end{center}
\end{figure}

The state appended to a particle at time $\t$ represents (a) the clustering restricted to the anchors (shown in the first line of each rectangle in Figure~\ref{fig:local-state-space}), and (b), the cluster joined by $\y_{\sigma_\t}$ (encoded by the anchor(s) contained in the joined cluster, second line in the same figure).
As shown in Figure~\ref{fig:local-state-space}, the ``merge state'' (left) is an absorbing state, encoding the fact that following this local decision, all children particles are forced to join the unique block in the restricted clustering. The two ``split states'' (right), on the other hand, both have two outgoing transitions, encoding the fact that for each index in $\sBar \backslash \s$, the corresponding observation needs to be allocated to one of the two blocks.

There is a bijection between the support of $\piBar$, and particles respecting the transition constraints defined by the arrows in Figure~\ref{fig:local-state-space}.
More precisely, for each state $\x \in \X$, we let $\partSupport(\x)$ denote the set of allowed transitions from $\x$. We write $\xVec_{\t} \in \partSupport_\t$ if (a) $\x_1 = \#1$, and (b) for all $\t' \in \{2, \dots, \t\}$, $\x_{\t'} \in \partSupport(\x_{\t'-1})$.
From this definition, we obtain the following result whose proof is given in Appendix~\ref{appendix:bijection}.

\begin{proposition}\label{prop:bijection}
	For any permutation $\sigmaVec$ satisfying $\{\sigma_1, \sigma_2\} = \s$, there is a bijective map $\phi^\sigmaVec$ from the space of particles respecting the transition constraints, $\partSupport_\n$, to the support of the restricted target, $\support(\piBar)$.
\end{proposition}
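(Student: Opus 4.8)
The plan is to prove the proposition by writing down $\phi^{\sigmaVec}$ and its inverse explicitly, checking that they are mutually inverse, and using a cardinality count as a confirmation. The starting point is to describe both sets concretely. On the target side, $\support(\piBar)$ consists of the partitions $\cBar'$ of $\sBar$ in which every block meets $s$ (here I take $\support(\piBar)$ to be this combinatorial support, which agrees with the measure-theoretic one whenever the $\tauIBar$, $\tauII$, and $\L$ factors are positive, as for the standard priors). Since $|s| = 2$, such a partition has at most two blocks, so either $\cBar' = \{\sBar\}$ (a \emph{merge}, $|\cBar'| = 1$) or $\cBar' = \{b_1, b_2\}$ with $\sigma_1 \in b_1$ and $\sigma_2 \in b_2$ (a \emph{split}, $|\cBar'| = 2$), the assignment of the two anchors to the two blocks being forced because the anchors are distinct and each block contains exactly one of them. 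On the particle side, the constraints force $x_1 = \#1$ and $x_2 \in \{\#2, \#4\}$; if $x_2 = \#2$ then $\partSupport(\#2) = \{\#2\}$ forces $x_\t = \#2$ for all $\t \ge 2$, whereas if $x_2 = \#4$ then each $x_\t$ for $\t \ge 3$ is freely either $\#3$ or $\#4$. Hence $\partSupport_\n$ has exactly $1 + 2^{\n-2}$ elements, matching the $1 + 2^{\n-2}$ elements of $\support(\piBar)$ (one merge, and $2^{\n-2}$ splits obtained by sending each of the $\n-2$ non-anchor indices to one of the two anchor blocks).

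Next I would define the decoding map. Given $\xVec_\n \in \partSupport_\n$, if $x_2 = \#2$ set $\phi^{\sigmaVec}(\xVec_\n) = \{\sBar\}$; if $x_2 = \#4$ set $\phi^{\sigmaVec}(\xVec_\n) = \{b_1, b_2\}$ with $b_1 = \{\sigma_1\} \cup \{\sigma_\t : \t \ge 3,\ x_\t = \#3\}$ and $b_2 = \{\sigma_2\} \cup \{\sigma_\t : \t \ge 3,\ x_\t = \#4\}$. This lands in $\support(\piBar)$: the merge image is the single block $\sBar$, and in the split image $b_1, b_2$ partition $\sBar$ with $\sigma_1 \in b_1$, $\sigma_2 \in b_2$, so each block meets $s$. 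I would then write the inverse directly: for $\{\sBar\}$ return the unique absorbed particle $(\#1, \#2, \dots, \#2)$; for a split $\{b_1, b_2\}$ with $\sigma_1 \in b_1, \sigma_2 \in b_2$ return $(x_1, \dots, x_\n)$ with $x_1 = \#1$, $x_2 = \#4$, and, for $\t \ge 3$, $x_\t = \#3$ if $\sigma_\t \in b_1$ and $x_\t = \#4$ otherwise. It is routine to verify this sequence respects the transition constraints, hence lies in $\partSupport_\n$, and that the two maps compose to the identity in both directions.

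The verification is essentially bookkeeping, and the one point that genuinely requires care — the only place the argument could go wrong — is the ordered/unordered mismatch in the split case. The particle encoding is inherently \emph{ordered}, since state $\#3$ names ``anchor-$\sigma_1$'s block'' and state $\#4$ names ``anchor-$\sigma_2$'s block'', whereas $\cBar'$ is an \emph{unordered} pair $\{b_1, b_2\}$. The argument hinges on the observation that in a split the two anchors are distinct and lie in different blocks, so the block containing $\sigma_1$ and the block containing $\sigma_2$ are canonically determined; this is exactly what makes the decode and encode maps single-valued and prevents two distinct particles from yielding the same split (equivalently, it prevents a split from being counted twice). I would therefore state this observation up front and lean on it in both the well-definedness and the injectivity checks. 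Once it is pinned down, everything else follows from the absorbing structure of $\#2$ and the free branching of $\#3, \#4$ recorded by $\partSupport(\cdot)$ in Figure~\ref{fig:local-state-space}, and the cardinality count confirms surjectivity.
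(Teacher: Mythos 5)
Your proof is correct and takes essentially the same approach as the paper's: the paper likewise proves the proposition by exhibiting $\phi^\sigmaVec$ explicitly (a last state in $\{\#1,\#2\}$ decodes to the single block $\{\sigma_1,\dots,\sigma_\n\}$; otherwise the $\#3$/$\#4$ entries decode to the two anchor blocks) together with an explicit inverse keyed to whether $\sigma_\t$ is clustered with $\sigma_1$ or $\sigma_2$, relying—just as you do—on the fact that each block of a split contains exactly one anchor, so the unordered pair is canonically labelled. If anything, your decoding is slightly more careful than the paper's: the paper sets $\overline{\sigma}_1(\xVec_\t) = \{\sigma_{\t'} : \x_{\t'} = \#3\}$, which technically omits $\sigma_1$ (whose state is $\#1$, not $\#3$) from its block, a slip that your $b_1 = \{\sigma_1\} \cup \{\sigma_\t : \t \ge 3,\ \x_\t = \#3\}$ avoids; your cardinality count is an unneeded but harmless extra check.
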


We use this bijection to define a sequence of intermediate target and proposal distributions. The intermediate target at time $t$ of support $\partSupport_\t$ is given by:
\begin{equation}
\gamma^\sigmaVec_\t(\xVec_\t) \defeq \tauIBar(\cBar_\t) \left( \prod_{\b\in\cBar_\t} \tauII(|\b|) \L(\yVec_\b) \right),
\end{equation}
where $\cBar_\t = \phi^{\boldsymbol{\sigma}_{1:\t}}(\xVec_\t)$. 
By construction, we have that for $\t = \n$, $\gamma^\sigmaVec_\n(\xVec_{\n}) \propto \piBar(\phi^\sigmaVec(\xVec_{\n}))$ so Assumption~\ref{assumption:intermediate-final} is satisfied.

We define as proposals:

\begin{eqnarray}
\q^\sigmaVec_1(\x_1) &\defeq& \delta_{\#1}(\x_1), \\
\q^\sigmaVec_\t(\x_\t \mid \xVec_{\t-1}) &:=& \frac{\gamma^\sigmaVec_\t(\xVec_\t)}{\sum_{\x'_\t \in \partSupport(\x_{\t-1})}\gamma^\sigmaVec_\t(\xVec_{\t-1}, \x'_\t)}, \nonumber
\end{eqnarray}%
where $(\xVec_{\t-1}, \x'_\t)$ denotes the concatenation of $\x'_\t$ to the vector $\xVec_{\t-1}$, and $\xVec_{\t} = (\xVec_{\t-1}, \x_\t)$.
These definitions satisfy Assumption~\ref{assumption:support}, and yield the following weight updates:
\begin{eqnarray}\label{eq:particle-weights}
\w_\t(\xVec_{\t-1}, \x_t) &\defeq&  \frac{\gamma^\sigmaVec_\t(\xVec_\t)}{\gamma^\sigmaVec_{\t-1}( \xVec_{\t-1} )} \frac{1}{\q^\sigmaVec_\t(\x_\t \mid \xVec_{\t-1})} \\
&=& \frac{\sum_{\x'_\t \in \partSupport(\x_{\t-1})} \gamma^\sigmaVec_\t\left(\xVec_{\t-1}, \x'_\t\right)}{\gamma^\sigmaVec_{\t - 1}(\xVec_{\t-1})}  \nonumber \\
&=& \sum_{\x'_\t \in \partSupport(\x_{\t-1})} \frac{\gamma^\sigmaVec_\t\left(\xVec_{\t-1}, \x'_\t\right)}{\gamma^\sigmaVec_{\t - 1}(\xVec_{\t-1})}. \nonumber \label{eqn:gamma_ratio}
\end{eqnarray}
If $t > |\s|$ then Equation~(\ref{eqn:gamma_ratio}) simplifies as follows
\begin{eqnarray}
\frac{\gamma^\sigmaVec_\t(\xVec_\t)}{\gamma^\sigmaVec_{\t-1}(\xVec_{\t-1})} &=& \frac{\tauII(|\b_\t^+|)}{\tauII(|\b_\t^-|)} \L\left(\yVec_{\b_\t^+}\mid \yVec_{\b_\t^-}\right), \label{eq:tau_ratio}
\end{eqnarray}
where
\begin{equation}
\L\left(\yVec_{\b_\t^+}\mid \yVec_{\b_\t^-}\right) \defeq \frac{\L\left(\yVec_{\b_\t^+}\right)}{\L\left(\yVec_{\b_\t^-}\right)}\;.
\end{equation}
Here $\b_t^-$ and $\b_t^+$ encode the block in which a point is added to when transitioning from $\xVec_{\t-1}$ to $\xVec_\t$, the first being the block before the addition, and the second, the same block after the addition:
\begin{equation}
\b_t^- \defeq \cBar_{\t-1} \backslash \cBar_\t
,\;\; 
\b_t^+ \defeq \b^- \cup \{\sigma_\t\}	
.
\end{equation}
Depending on the form of the partition prior and likelihood it may be possible to simplify these quantities into more computationally efficient forms.

\subsection{An improved sequence of intermediate target distributions}\label{sec:improved}

We now describe an improvement over the basic intermediate and proposal distributions presented in the previous section.
This improvement addresses a ``greediness" problem of the (conditional) SMC procedure.
Consider a case where the ratio $\tauIBar(1)/\tauIBar(2)$ between a merge and a split is large.
This can occur for example when the Dirichlet process concentration parameter $\concentration$ is small.
In this case, the proposal in the first non-trivial step, $\q^\sigmaVec_2$, will assign most of its mass to the transition from state $\#1$ to state $\#2$ (see Figure~\ref{fig:local-state-space}).
However, the likelihood might overcome this prior when $|\sBar|$ is large.
But proposing such split has low probability under the definitions given in the previous section, as $\#2$ is an absorbing state.

To overcome this issue, we build a new sequence of intermediate distributions, which delay the incorporation of the prior:
\begin{equation}
\widehat{\gamma^\sigmaVec_\t}(\xVec_\t) \defeq \bracearraycond{
	\1[\xVec_\t \in \partSupport_\t],&\;\;\textrm{if }\t\in\{1,2\}, \\
	\left( \gamma^\sigmaVec_2(\xVec_{1:2}) \right)^{\schedule_\t} \frac{\gamma^\sigmaVec_\t(\xVec_\t)}{\gamma^\sigmaVec_2(\xVec_{1:2}),}&\;\;\textrm{otherwise.}
}
\end{equation}
where $\schedule_\t$ is a positive increasing annealing schedule such that $\schedule_\n = 1$.
We use the following proposal based on these new intermediate distributions:
\begin{eqnarray}
\widehat{\q^\sigmaVec_1}(\x_1) &\defeq& \delta_{\#1}(\x_1), \\
\widehat{\q^\sigmaVec_\t}(\x_\t \mid \xVec_{\t-1}) &:=& \frac{\widehat{\gamma^\sigmaVec_\t}(\xVec_\t)}{\sum_{\x'_\t \in \partSupport(\x_{\t-1})} \widehat{\gamma^\sigmaVec_\t}(\xVec_{\t-1}, \x'_{\t-1})}. \nonumber
\end{eqnarray}
This yields the weight updates:
\begin{eqnarray}\label{eq:final-particle-weight}
\widehat{\w_\t}(\xVec_{\t-1}, \x_t) &:=& \sum_{\x'_\t \in \partSupport(\x_{\t-1})}\frac{\widehat{\gamma^\sigmaVec_\t}\left( \xVec_{\t-1}, \x'_\t\right)}{\widehat{\gamma^\sigmaVec_{\t - 1}}(\xVec_{\t-1})}.
\end{eqnarray}
For simplicity, we pick $\schedule_\t = \frac{\t - 2}{\n - 2}$. This choice simplifies ratios of intermediate distributions to:
\begin{eqnarray}
\frac{\widehat{\gamma^\sigmaVec_\t}(\xVec_\t)}{\widehat{\gamma^\sigmaVec_{\t-1}}(\xVec_{\t-1})} &=& \bracearraycond{
	1&\;\;\textrm{if }\t = 2, \\
	\left( \gamma^\sigmaVec_2(\xVec_{1:2}) \right)^{\Delta \schedule} \frac{\gamma^\sigmaVec_\t(\xVec_\t)}{\gamma^\sigmaVec_{\t-1}(\xVec_{\t-1})}&\;\;\textrm{if }\t > 2,} \label{eq:weight-final} \\ \nonumber
\end{eqnarray}
where $\Delta \schedule\defeq (\n - 2)^{-1}$.

\subsection{Runtime analysis}\label{sec:runtime}

To simplify the analysis of the running time, we make a few assumptions.

\begin{assumption}\label{assumption:parametric-likelihood-running-time}
	The parametric likelihood model has the following properties:
	\begin{enumerate}
		\item let $\suffstat \defeq \suffstat(\yVec_\b)$ denote a sufficient statistic, and define, with a slight abuse of notation, $\L(\suffstat) \defeq \L(\yVec_\b)$.
		For a given sufficient statistic value $\suffstat$, the likelihood $L(\suffstat)$ can be computed in time $O(l)$,
		\item the sufficient statistic for $\b^+$, $\suffstat^+ \defeq \suffstat(\yVec_{\b^+})$, can be updated in time $O(u)$ from the sufficient statistic for $\b^-$, $\suffstat^- \defeq \suffstat(\yVec_{\b^-})$.
	\end{enumerate}
\end{assumption}
The next assumption holds for all the clustering priors reviewed in Section~\ref{sec:Bayesianmixturemodels}.
\begin{assumption}\label{assumption:ratio-running-time}
	The ratio $\frac{\tauII(j+1)}{\tauII(j)}$ can  be computed in constant time.
\end{assumption}
For example, with a Dirichlet process, this ratio is equal to $j!/(j-1)! = j$.  Since $|\b^+| = |\b^-| + 1$, Assumption~\ref{assumption:ratio-running-time} implies that the ratio $\frac{\tauII(|\b^+|)}{\tauII(|\b^-|)}$ in Equation~(\ref{eq:tau_ratio}) can be computed in constant time.
\begin{proposition}
	Under Assumptions~\ref{assumption:parametric-likelihood-running-time} and \ref{assumption:ratio-running-time}, one weight computation, Equation~(\ref{eq:final-particle-weight}), takes time $O(u+l)$.
	The storage cost per particle is $O(1)$. Moreover, the running time per weight computation is independent of the number of clusters.
\end{proposition}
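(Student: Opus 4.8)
The plan is to treat the three claims in turn, all resting on the factorization of the weight given by Equations~(\ref{eq:weight-final}) and (\ref{eq:tau_ratio}). The driving observation is that the sum defining $\widehat{\w_\t}$ in Equation~(\ref{eq:final-particle-weight}) ranges over $\partSupport(\x_{\t-1})$, and by inspection of the transition diagram in Figure~\ref{fig:local-state-space} every state has at most two outgoing transitions. Hence the sum has at most two summands, a bound independent of $\t$, $\n$, and the number of clusters, so it suffices to show that a single ratio $\widehat{\gamma^\sigmaVec_\t}(\xVec_{\t-1}, \x'_\t)/\widehat{\gamma^\sigmaVec_{\t-1}}(\xVec_{\t-1})$ is evaluable in time $O(u+l)$ using $O(1)$ storage.

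For the running time I would split according to Equation~(\ref{eq:weight-final}). When $\t = 2$ the ratio is identically $1$ and costs nothing. When $\t > 2$ it equals $(\gamma^\sigmaVec_2(\xVec_{1:2}))^{\Delta\schedule} \cdot \gamma^\sigmaVec_\t(\xVec_\t)/\gamma^\sigmaVec_{\t-1}(\xVec_{\t-1})$. The first factor depends only on the particle's first two (fixed) allocations and on the schedule; with the choice $\schedule_\t = (\t-2)/(\n-2)$ the exponent $\Delta\schedule$ is the same at every step, so the whole scalar $(\gamma^\sigmaVec_2(\xVec_{1:2}))^{\Delta\schedule}$ is computed once per particle and thereafter retrieved in $O(1)$. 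The second factor is, by Equation~(\ref{eq:tau_ratio}), $\frac{\tauII(|\b_\t^+|)}{\tauII(|\b_\t^-|)}\,\L(\yVec_{\b_\t^+}\mid\yVec_{\b_\t^-})$; since $|\b_\t^+| = |\b_\t^-|+1$, Assumption~\ref{assumption:ratio-running-time} gives the prior ratio in $O(1)$, while the likelihood ratio $\L(\suffstat^+)/\L(\suffstat^-)$ costs $O(u)$ to update the statistic plus $O(l)$ per likelihood evaluation by Assumption~\ref{assumption:parametric-likelihood-running-time}, i.e. $O(u+l)$. Multiplying by the constant number of summands keeps the total at $O(u+l)$.

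For the storage claim I would argue that, to carry out the step at time $\t$, a particle needs only its current state $\x_{\t-1}\in\X$ (one of four values), the sufficient statistics of the blocks it currently occupies, and the two cached scalars $(\gamma^\sigmaVec_2(\xVec_{1:2}))^{\Delta\schedule}$ and the running weight. The support restriction built into $\piBar$ forces $|\cBar_\t|\le 2$, so at most two blocks are ever tracked, and each sufficient statistic is of fixed size; nothing carried from one step to the next grows with $\t$, $\n$, or the number of clusters, which yields the $O(1)$ bound.

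Independence from the number of clusters then follows from where $\nclust$ can enter: the only cluster-count-dependent factor in $\gamma^\sigmaVec_\t$ is $\tauIBar$, which through $\tauIBar(j)=\tauI(j+\nclust-|\cBar|)$ carries all of the $\nclust$-dependence. For $\t>2$, appending $\sigma_\t$ to an existing block leaves $|\cBar_\t|=|\cBar_{\t-1}|$, so $\tauIBar(|\cBar_\t|)=\tauIBar(|\cBar_{\t-1}|)$ and this factor cancels in the ratio---which is precisely why Equation~(\ref{eq:tau_ratio}) contains no $\tauIBar$ term. The residual $\nclust$-dependence survives only inside $\gamma^\sigmaVec_2(\xVec_{1:2})$, computed once and cached, so the recurring per-weight work never re-evaluates $\tauI$. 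The step I expect to be the main obstacle is exactly this bookkeeping: verifying that the $\tauIBar$ cancellation is exact for every $\t>2$, and justifying that the single $\nclust$-dependent quantity $\gamma^\sigmaVec_2(\xVec_{1:2})$ may legitimately be charged as a one-time cached cost rather than counted in the per-step running time.
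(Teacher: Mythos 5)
Your proof is correct and takes essentially the same approach as the paper's: both arguments rest on the observation that $|\partSupport(\cdot)| \le 2$ bounds the sum in Equation~(\ref{eq:final-particle-weight}) by a constant number of terms, each costing $O(u+l)$ via the simplified ratio in Equation~(\ref{eq:tau_ratio}) together with Assumptions~\ref{assumption:parametric-likelihood-running-time} and \ref{assumption:ratio-running-time}, with the $O(1)$ storage following from the fixed dimension of the sufficient statistics and the four-element state space $\X$. The only difference is that you spell out details the paper leaves implicit---the exact cancellation of the $\tauIBar$ factor for $\t > 2$ and the one-time caching of the annealing factor $\left( \gamma^\sigmaVec_2(\xVec_{1:2}) \right)^{\Delta \schedule}$---which is a legitimate elaboration rather than a different argument.
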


The running time result follows directly from the fact that $|\partSupport(\cdot)| \le 2$, and hence the sum in Equation~(\ref{eq:final-particle-weight}) has a constant number of terms.
The constant storage cost follows from the finite dimensionality of the sufficient statistics (see Assumption~\ref{assumption:parametric-likelihood-running-time}), and from $|\X| = 4$.

We also remind the reader that for most resampling schemes, including the one in Equation~(\ref{eq:resampling}), the computational cost as a function of the number of particles and SMC\ iterations is $O(\N \n) = O(\N |\sBar|)$ \citep{Doucet2009Tutorial}.

\subsection{Generalization}

For simplicity, we have assumed so far that $|\s| = 2$, and hence, according to the auxiliary variable analysis of Appendix~\ref{appendix:pg-correctness}, $|\cBar| \le 2$.
In fact, the same auxiliary variables with more than two anchor points can be used to construct novel sampling algorithms. Details are given in Appendix~\ref{appendix:generalization}.

This generalization loses some interpretability compared to the split-merge case ($|\s| = 2$), but can be useful in finite clustering models.
In this case, it may only be possible to split a cluster if a merge is performed simultaneously.
For this reason, we use $|\s| = 3$ in the finite Dirichlet mixture model examples in Section~\ref{sec:artificial}.
For the Dirichlet Process, we did not observe notable improvements by going from $|\s| = 2$ to $|\s| = 3$, so we use the former setting for the non-parametric models.

\section{Applications\label{sec:applications}}

In this section, we demonstrate the performance of our methodology and compare it to standard alternatives.
We use a series of synthetic datasets covering a large spectrum of cluster separateness, as well as real data coming from a geolocation application.

\subsection{Implementation and evaluation}

We have implemented the following three Dirichlet Process (DP) clustering samplers in the same Python codebase: the PGSM method described in this work, the efficient Sequentially-Allocated Merge Split (SAMS) method of \cite{dahl2005}, as well as the standard Gibbs sampler. 
The code and instructions allowing to reproduce the experiments are available at \url{https://github.com/aroth85/pgsm}. 
The implementation of the likelihood computations are the same for all samplers, thus the running times are comparable.
We have tested the correctness of our computer implementations by computing the true posterior distribution on small examples via combinatorial enumeration, and verified that the Monte Carlo estimates converged to this distribution for all three methods.

Unless we state otherwise, we initialized the samplers with the single-cluster configuration.
In datasets much smaller than those studied in this work, initializing the Gibbs sampler to the fully disconnected clustering is advantageous \citep{sudderth06a}, but in larger datasets, the quadratic burn-in cost involved with this initialization is not scalable.
However, we verified that after a long burn-in period the Gibbs method initialized to the fully disconnected clustering eventually reaches the same likelihood values in the synthetic examples.
We also investigate the high cost of the fully disconnected initialization in the results shown in Figure~\ref{fig:mopsi}.

To evaluate the performance of the samplers, we held-out a random but fixed 10\% of each dataset.
We collected samples and computed the predictive likelihood and V-measure \citep{rosenberg2007vmeasure} every 100 iterations.
All experiments are replicated 10 times, and smoothed using a moving average with a window size of 20 for plotting.

\subsection{Likelihoods and priors}

In six of the synthetic experiments and the geolocation experiments discussed further in Section \ref{sec:artificial} and Section \ref{sec:geo}, we used a Normal-Inverse-Wishart conjugate likelihood model.
In two of the synthetic experiments in Section~\ref{sec:artificial} we used Bernoulli mixture models with 50 dimensions.
Each dimension is an independent draw from a Bernoulli random variable with cluster specific parameters.
We set a proportion of the dimensions to be uninformative as follows.
Values for uninformative dimensions were drawn from Bernoulli variables with parameter 0.5 regardless of the cluster membership.
Values for the remaining dimensions were drawn from cluster specific Bernoulli variables with parameters sampled from the Uniform distribution.
For the cancer data discussed in Section \ref{sec:genomics}, we use the application-specific PyClone likelihood model \citep{roth2014pyclone}.
The PyClone model uses genomic sequence data from tumours to identify mutations which co-occur in cells and estimates the proportion of cells harbouring the mutations.
The model is not conjugate, so we apply a discretization that allows us to treat the model as conjugate.
Complete details for each model are provided in Appendix~\ref{appendix:models}.

We use a DP prior with base measure given by the conjugate prior of the corresponding likelihood model in all experiments.
We use a $\mbox{Gamma}(1,0.1)$ prior and resample the value of the concentration parameters $\alpha_{0}$ using a standard auxiliary variable method \citep{escobarWest1995}.
The value of $\alpha_{0}$ is initialized to 1.0.

\subsection{Artificial datasets}\label{sec:artificial}

\begin{figure}[t]
	\centering
	\includegraphics[scale=1.0]{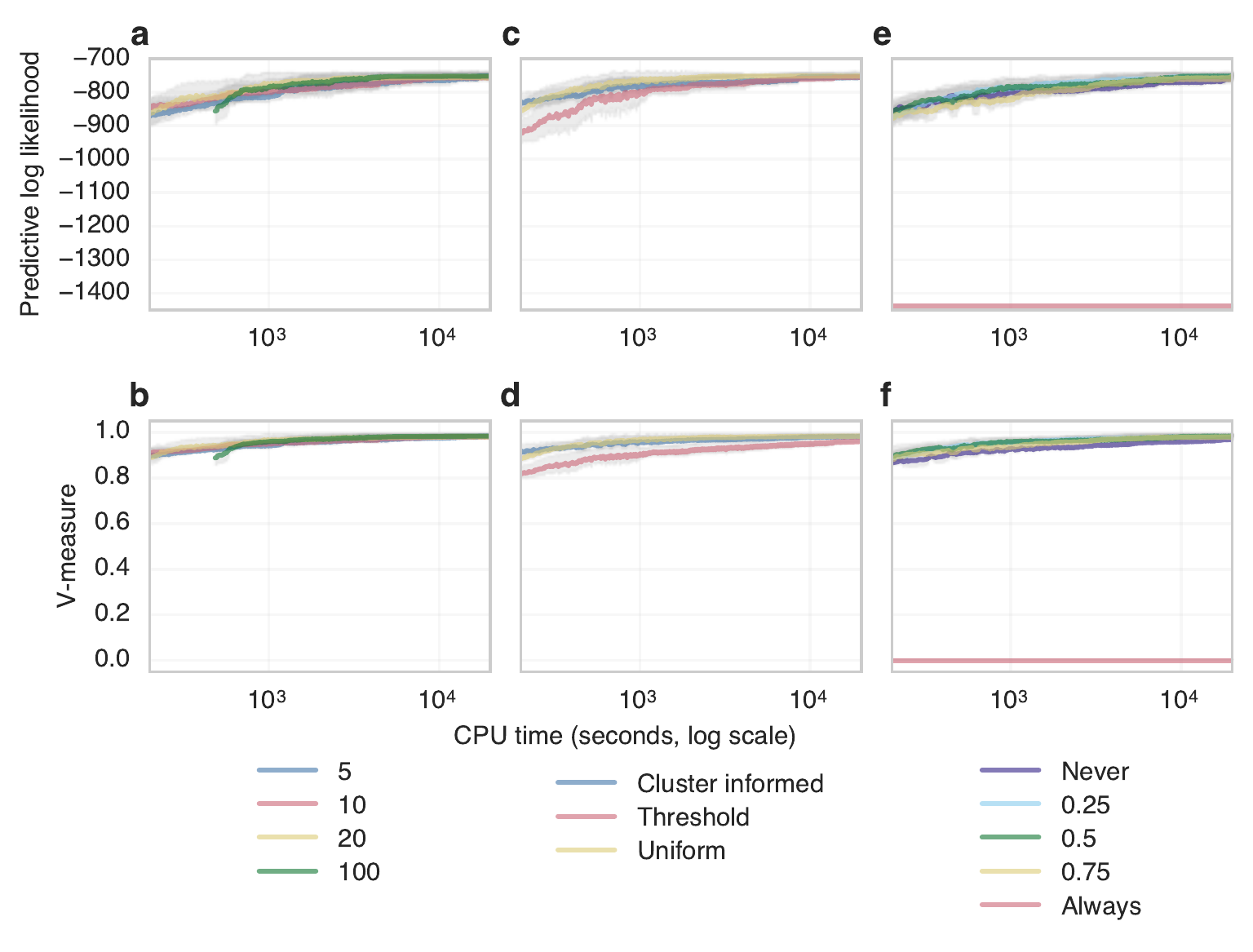}
	\caption{
		Effect on the predictive performance and clustering accuracy as a function of CPU time in log scale.
		\textbf{a}) and \textbf{b}) Varying the number of particles with the cluster informed proposal and a relative ESS resampling threshold of 0.5.
		\textbf{c}) and \textbf{d}) Varying the distribution for proposing pairs of anchor points, $h$, with 20 particles and an ESS resampling threshold of 0.5.
		\textbf{e}) and \textbf{f}) Varying the relative ESS resampling threshold with the cluster informed anchor proposal and 20 particles.
	}%
	\label{fig:pilot}%
\end{figure}

\begin{figure}[t]
	\centering
	\includegraphics{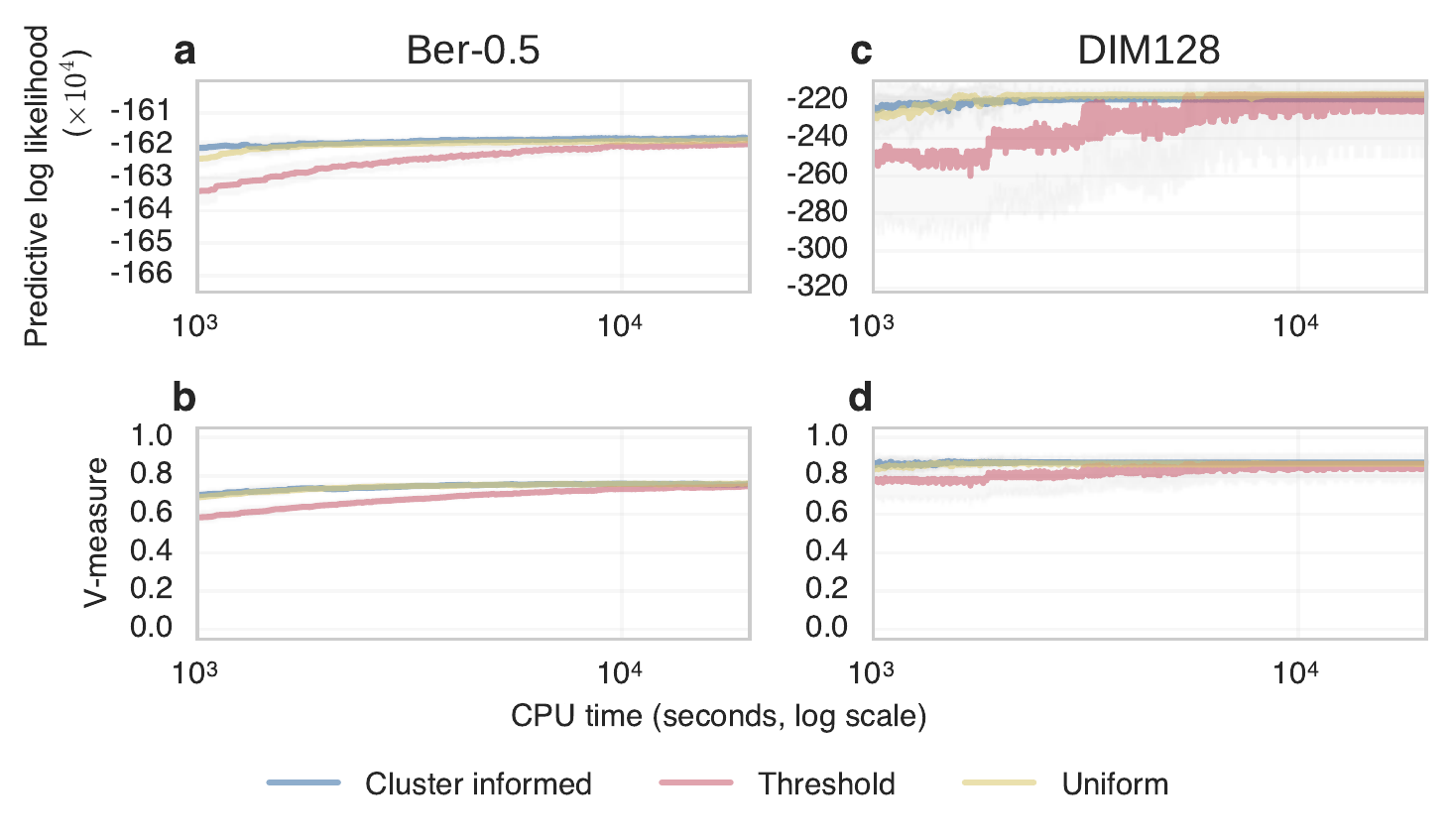}
	\caption{
		Effect on the predictive performance and clustering accuracy as a function of CPU time in log scale with different distribution $h$ for proposing pairs of anchor points.
		\textbf{a}) and \textbf{b}) Comparison using a 50 dimensional Bernoulli dataset with 50\% of the dimensions being uninformative.
		\textbf{c}) and \textbf{d}) Comparison using a 128 dimensional Normal dataset.
	}%
	\label{fig:pilot2}%
\end{figure}

We used four sources of synthetic data. 
First, the four datasets from \cite{Franti2006data} denoted S1--4.
Each of the four datasets consists in 5000 points generated from 15 bivariate Normal distributions with increasing amount of overlap between the clusters.
Second, we created another synthetic dataset, which we call C1, shown in Figure~\ref{fig:finite}, right.
Third, we simulated two datasets with 5000 points from a Bernoulli mixture model with 50 dimensions and 16 clusters, where we set 25\% (Ber-0.25) and 50\% (Ber-0.5) of dimensions to be uninformative.
Finally, we used 64 (DIM064) and 128 (DIM128) dimensional Normal datasets from \cite{dim_sets} with 1024 data points and 16 clusters.

We started with a series of pilot experiments on S1 only, designed to assess the effect of various tuning parameters on the performance of PGSM.
For all pilot experiments we use the unmixed PGSM sampler to isolate the effect of each tuning parameter.
In practice it is usually better to alternate between one iteration of the PGSM sampler and one iteration of the Gibbs sampler.
The effect of this mixing is explored later in this section.

We first explore the performance as we vary the number of particles used for each PGSM iteration (Figures~\ref{fig:pilot} \textbf{a} and \textbf{b}).
The curves with more particles take more time per iteration to run, however seem to achieve slightly better V-measure and predictive likelihood after the initial iterations.
The performance difference are negligible and the PGSM sampler generally seems insensitive to the number of particles for this dataset.
For subsequent experiments we used 20 particles.

Next we compare the performance of different proposal distributions for the anchor auxiliary variables (Figures~\ref{fig:pilot} \textbf{c} and \textbf{d}).
For this experiment we kept the number of particles fixed at 20 and the resampling threshold at 0.5.
We consider three proposal distributions.
\begin{description}
	\item[Uniform:] Sample the anchors uniformly at random from the $\binom{\T}{2}$ possibilities.
	\item[Cluster informed:] Sample the first anchor uniformly at random.
	Sample a cluster to draw the second anchor from with probability $\frac{1}{|c-1|}$ for the cluster containing the first anchor; otherwise proportional to  $$\frac{L(y_{\bar{b} \cup b})}{L(y_{\bar{b}}) L(y_{b})}$$ where $\bar{b}$ is the cluster containing the first anchor and $b$ is the candidate cluster.
	Sample the second anchor uniformly from the chosen cluster.
	\item[Threshold informed:] Sample the first anchor uniformly at random and the second anchor from clusters that have Chinese restaurant attachment probabilities greater than a threshold of 0.01.
\end{description}
To ensure the adaptation of the informed proposals stops, and does not perturb the invariant distribution of the sampler, we only update the proposal distributions when the number of clusters instantiated breaks the previous record.
Adaptation is guaranteed to terminate in finite time given there are only a finite number of points.
This would usually take a long time, so in practice it is advantageous to stop adaptation after a fixed period of time.
Detailed implementations of the cluster informed and threshold informed proposals are given in Appendix~\ref{appendix:proposals}.
Our results  suggest that performance is not strongly affected by the anchor proposal distribution $\h$.
We only saw a small advantage when using the informed proposal distributions for the auxiliary anchor variables in $\s$.
We also explored the effect of $h$ in higher dimensional datasets (Figure~\ref{fig:pilot2}).
Again we found the results are not sensitive to the choice of $h$.
With the exception of the circle dataset, where we used the uniform proposal, we used the cluster informed prior for both the PGSM and SAMS samplers in subsequent experiments.

The frequency of the resampling step had a larger effect.
A critical implementation point in order for the PGSM method to work is that resampling should be done adaptively by monitoring the ESS of the particle approximations in Algorithm \ref{alg:pgsm} \citep{Liu1995AdaptResampling, Lee2011}. 
We varied the relative ESS resampling threshold $\ressThreshold$ from 0 (never resample) to 1 (always resample) (Figures~\ref{fig:pilot} \textbf{e} and \textbf{f}).
We observed that the performance is markedly degraded if resampling is performed after each SMC iteration, but similar for all other resampling thresholds.
We used a threshold of 0.5 in all other experiments.

\begin{figure}
	\includegraphics{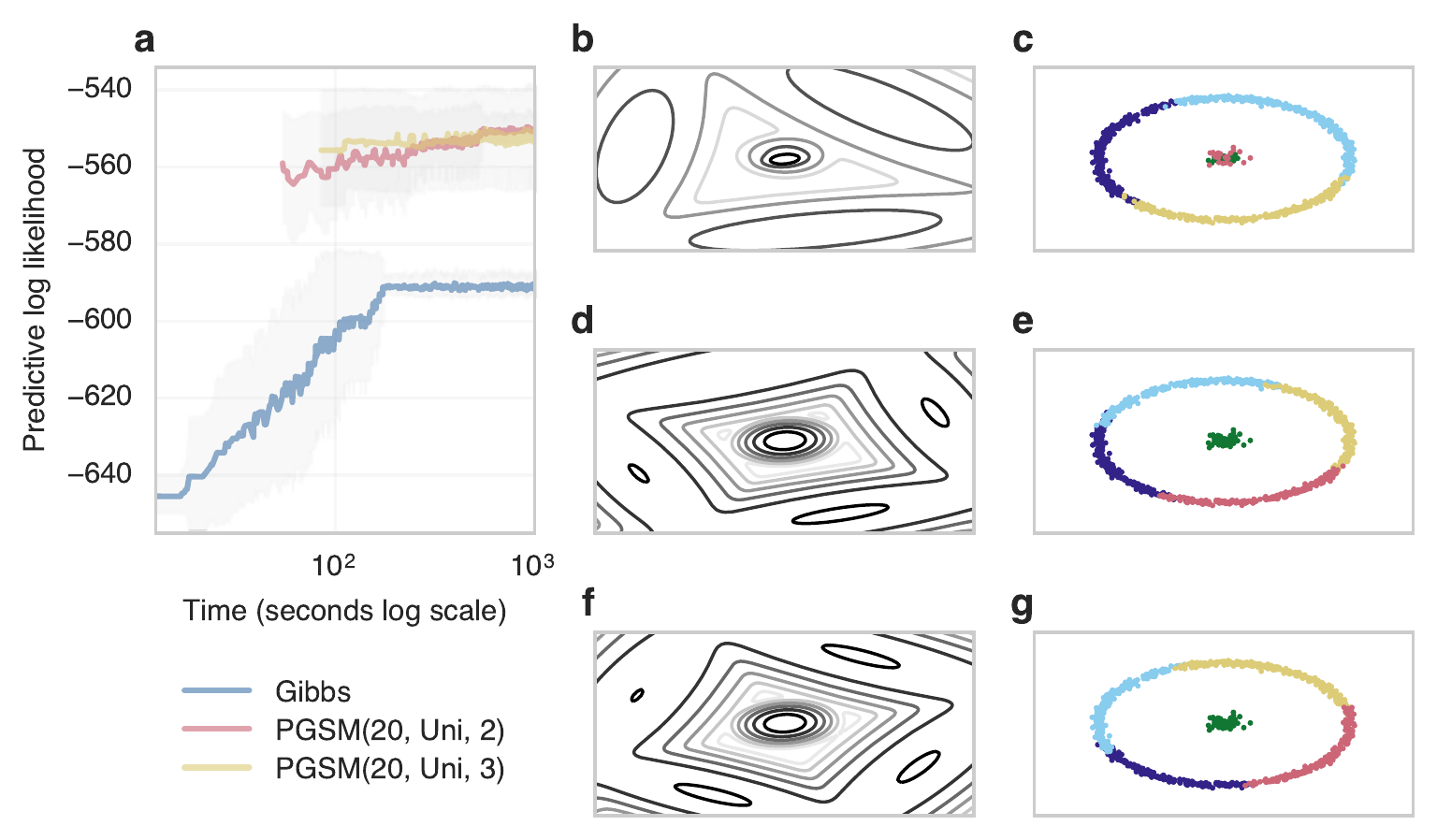}
	\caption{
		Comparison of Gibbs and PGSM for finite Dirichlet prior with $\maxnclust=5$.
		\textbf{a}) Predictive log likelihood comparison of Gibbs and PGSM using two (PGSM(20, Uni, 2)) or three (PGSM(20, Uni, 3)) anchors.
		Predictive density after 1000 seconds for \textbf{b}) Gibbs; \textbf{d}) PGSM(20, Uni, 2); \textbf{f}) PGSM(20, Uni, 3).
		Cluster assignment after 1000 seconds for \textbf{c}) Gibbs; \textbf{e}) PGSM(20, Uni, 2); \textbf{g}) PGSM(20, Uni, 3).
	}\label{fig:finite}
\end{figure}

Next, we used the dataset C1 to investigate the effectiveness of our method with the finite clustering model introduced in Section~\ref{sec:mixture}, with the number of clusters fixed to $\maxnclust=5$.
In this case, standard split-merge methods such as SAMS are less helpful since only merging can be performed when the maximum number of clusters has been allocated. 
The PGSM sampler does not have this restriction and naturally allows simultaneously splitting and merging while preserving the total number of clusters.
Furthermore, the PGSM sampler can use more than two anchors, potentially allowing for large changes in configuration without altering the number of clusters.
We compared the PGSM with two ($|s|=2$) and three ($|s|=3$) anchors to the Gibbs sampler.
The PGSM method outperformed the Gibbs sampler, though increasing the number of anchors did not improve the performance (Figure~\ref{fig:finite} \textbf{a}).
We plot the predictive densities (Figure~\ref{fig:finite} \textbf{b}, \textbf{d}, \textbf{f}) and cluster allocations (Figure~\ref{fig:finite} \textbf{c}, \textbf{e}, \textbf{g}) after running each sampler for 1000 seconds.
At this point the PGSM sampler used a single cluster to model the points in the middle, while the Gibbs samplers used two clusters to model the central cluster.

\begin{figure}[h!]
	\includegraphics{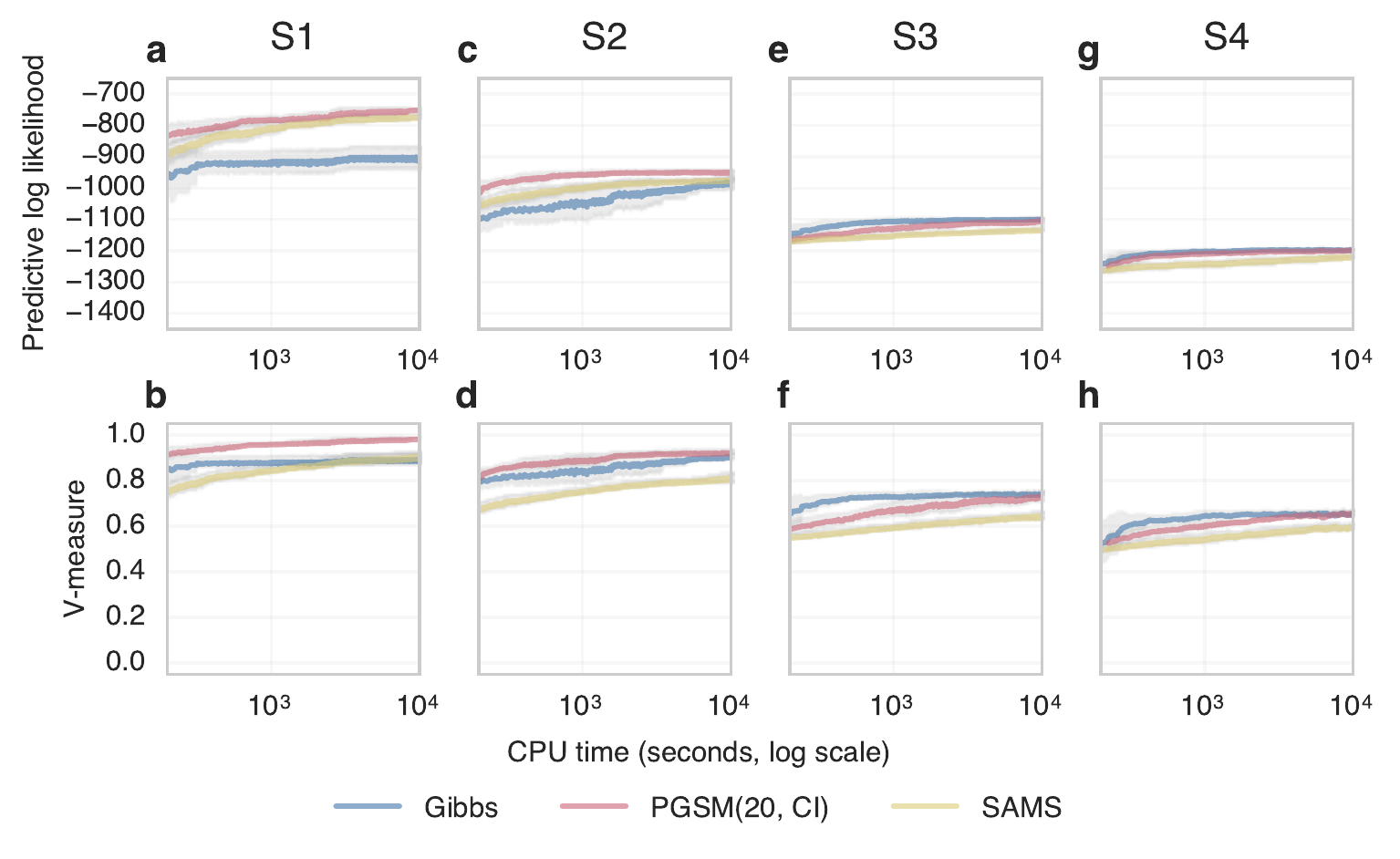}
	\caption{
		Comparison of MCMC algorithm using only a single kernel at a time (pure kernels) on 2D Normal datasets.
		Predictive log likelihood for datasets \textbf{a}) S1; \textbf{c}) S2; \textbf{e}) S3; \textbf{g}) S4.
		V-measure for datasets \textbf{b}) S1; \textbf{d}) S2; \textbf{f}) S3; \textbf{h}) S4.
	}
	\label{fig:main-synthetic-pure}
\end{figure}

\begin{figure}[h!]
	\includegraphics{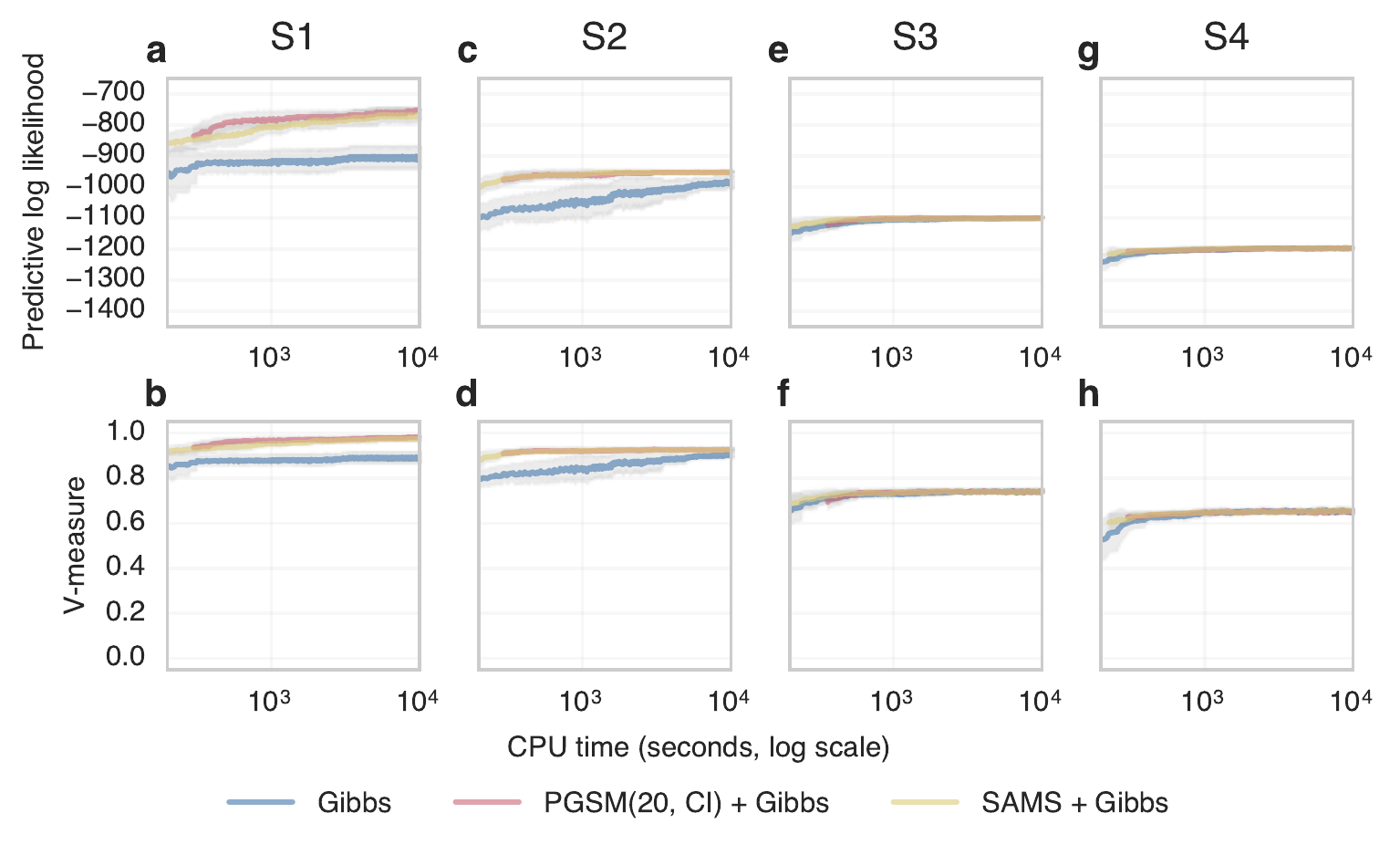}
	\caption{
		Comparison of MCMC algorithm using split-merge moves combined with standard Gibbs moves (mixed kernels) on 2D Normal datasets.
		Predictive log likelihood for datasets \textbf{a}) S1; \textbf{c}) S2; \textbf{e}) S3; \textbf{g}) S4.
		V-measure for datasets \textbf{b}) S1; \textbf{d}) S2; \textbf{f}) S3; \textbf{h}) S4.
	}
	\label{fig:main-synthetic-mixed}
\end{figure}

In Figures~\ref{fig:main-synthetic-pure} and \ref{fig:main-synthetic-mixed} we show a series of experiments on the four datasets S1--4 describe in the previous section.
We compare the PGSM to standard Gibbs and the SAMS method of \cite{dahl2005}.
We first compared pure kernels, where the split-merge samplers are not mixed with standard Gibbs moves (Figure~\ref{fig:main-synthetic-pure}).
The pure PGSM kernel outperformed both Gibbs and SAMS on datasets S1 and S2.
The Gibbs kernel and PGSM perform similarly for datasets S3 and S4, and both outperformed SAMS.
When the split-merge moves are mixed with standard Gibbs moves, the split-merge methods	 outperformed Gibbs on datasets S1 and S2, with all methods showing similar performance on datasets S3 and S4 (Figure~\ref{fig:main-synthetic-mixed}).

\begin{figure}[h!]
	\centering
	\includegraphics{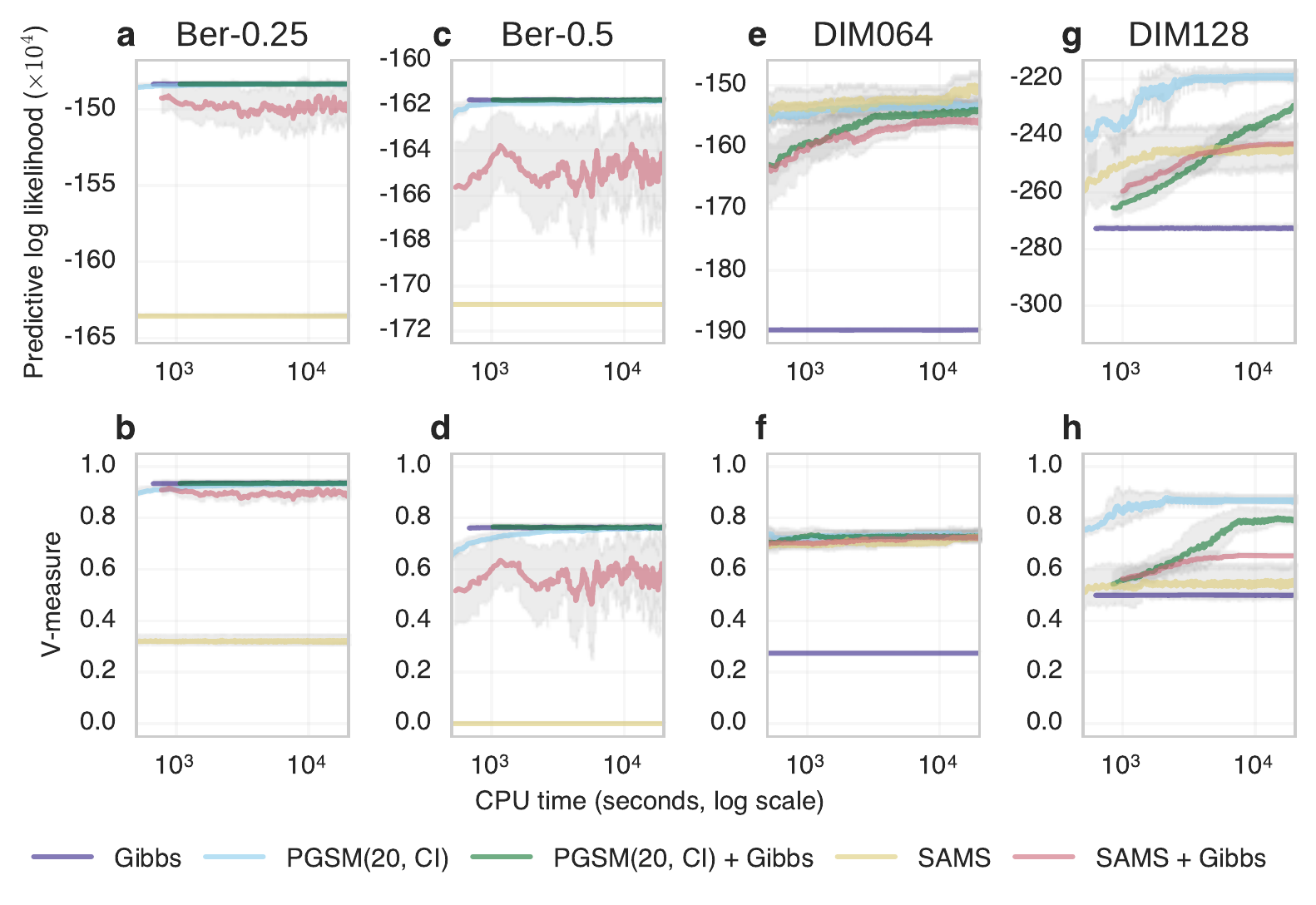}
	\caption{
		Comparison of MCMC algorithm using split-merge moves with combined with standard Gibbs moves (mixed kernels) on 50 dimensional Bernoulli data with 25\% (Ber-0.25) and 50\% (Ber-0.5) and Normal data  64 (DIM064) and 128 (DIM128) dimensions.
		Predictive log likelihood for datasets \textbf{a}) Ber-0.25; \textbf{c}) Ber-0.5; \textbf{e}) DIM064; \textbf{g}) DIM128.
		V-measure for datasets \textbf{b}) Ber-0.25; \textbf{d}) Ber-0.5; \textbf{f}) DIM064; \textbf{h}) DIM128.
	}
	\label{fig:main-synthetic-high-dim}
\end{figure}

Finally, we explored the performance of the methods on four high dimensional datasets.
The mixed PGSM and Gibbs samplers performed the best on the Bernoulli datasets, while the unmixed PGSM sampler is slower to reach the same predictive likelihood and V-measure (Figures~\ref{fig:main-synthetic-high-dim} \textbf{a}-\textbf{d}).
The mixed SAMS sampler failed to reach the same predictive likelihood as the PGSM and Gibbs methods, oscillating around lower values.
The unmixed SAMS sampler appears to be trapped in a local mode, corresponding to poor predictive likelihood and V-measure.
For the Normal datasets, the Gibbs sampler was trapped in a local mode and had markedly worse performance than other methods (Figures~\ref{fig:main-synthetic-high-dim} \textbf{e}-\textbf{h}).
The unmixed samplers outperformed the mixed equivalents on the 64 dimensional data.
Furthermore, the unmixed PGSM method had a large performance advantage over all other methods on the 128 dimensional dataset.

\subsection{Geolocation data}\label{sec:geo}

We compared the performance of the three sampling methods on a geolocation dataset.
The dataset, described in more detail in \cite{Franti2010Mopsi}, consists of a subset of data collected by MOPSI, a Finnish mobile application where users can post their current geographic location via their mobile device.
The subset we used consists of a list of 13,467 locations (latitude-longitude pairs) from users located in Finland until 2012.
The data is freely accessible from \url{http://cs.joensuu.fi/sipu/datasets/}.

We use this data as a proxy for the estimation of mobile device user density.
The DP mixture of Normal-Inverse-Wishart distributions provides a natural way to obtain a parsimonious estimate of population density, where the flexibility on the shape and number of clusters can accommodate a broad range of density variability factors ranging from densely populated cities to vast low-density rural areas.

We summarize the results in Figure~\ref{fig:mopsi}.
In Figures~\ref{fig:mopsi} \textbf{a}-\textbf{c}, we display quantitative results as measured by held-out predictive likelihood performance.
In Figure~\ref{fig:mopsi} \textbf{a}, we show that mixed PGSM, mixed SAMS and Gibbs samplers perform similarly.
In Figure~\ref{fig:mopsi} \textbf{b}, we show that the performance of SAMS is considerably degraded if SAMS is not mixed with a Gibbs kernel.
In Figure~\ref{fig:mopsi} \textbf{c}, we show that the performance of PGSM is less degraded if not mixed with a Gibbs kernel.

In Figures~\ref{fig:mopsi} \textbf{e}-\textbf{j}, we visualize the posterior predictive density approximated using MCMC samples.
We also show the raw data in Figure~\ref{fig:mopsi} \textbf{d} for reference.
The following three pairs of density plots are included to illustrate the high computational cost of initializing a standard Gibbs sampler at a fully disconnected configuration.
From left to right: the first pair shows the predictive density after one round of Gibbs sampling initialized at the fully disconnected configuration (Figures~\textbf{e} and \textbf{f}); the second pair, after sampling with PGSM initialized at the fully connected configuration for the same time (Figures~\textbf{g} and \textbf{h}); the third, after running the Gibbs sampler for $10^5$ seconds (Figures~\textbf{i} and \textbf{j}).
This demonstrates that our method can produce accurate and compact density estimates without relying on an expensive initialization phase.

\begin{figure}[h!]
	\includegraphics[scale=0.95]{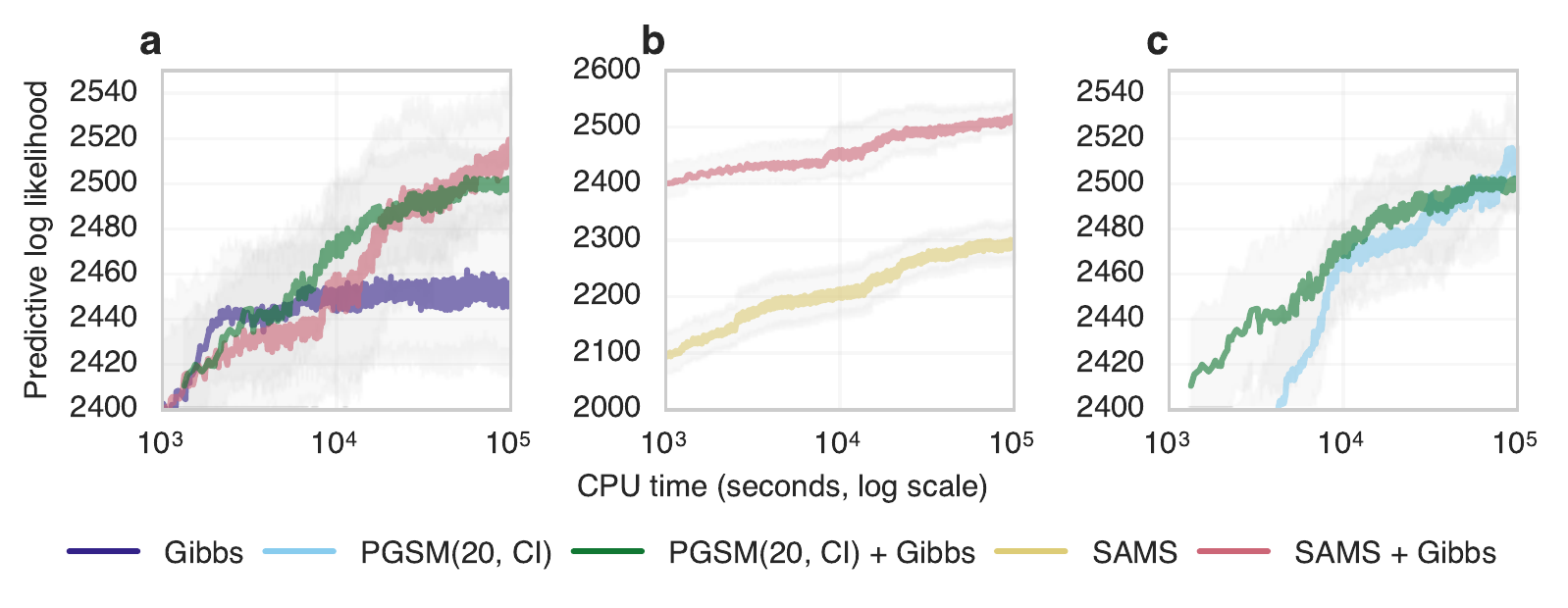}
	\includegraphics[scale=0.95]{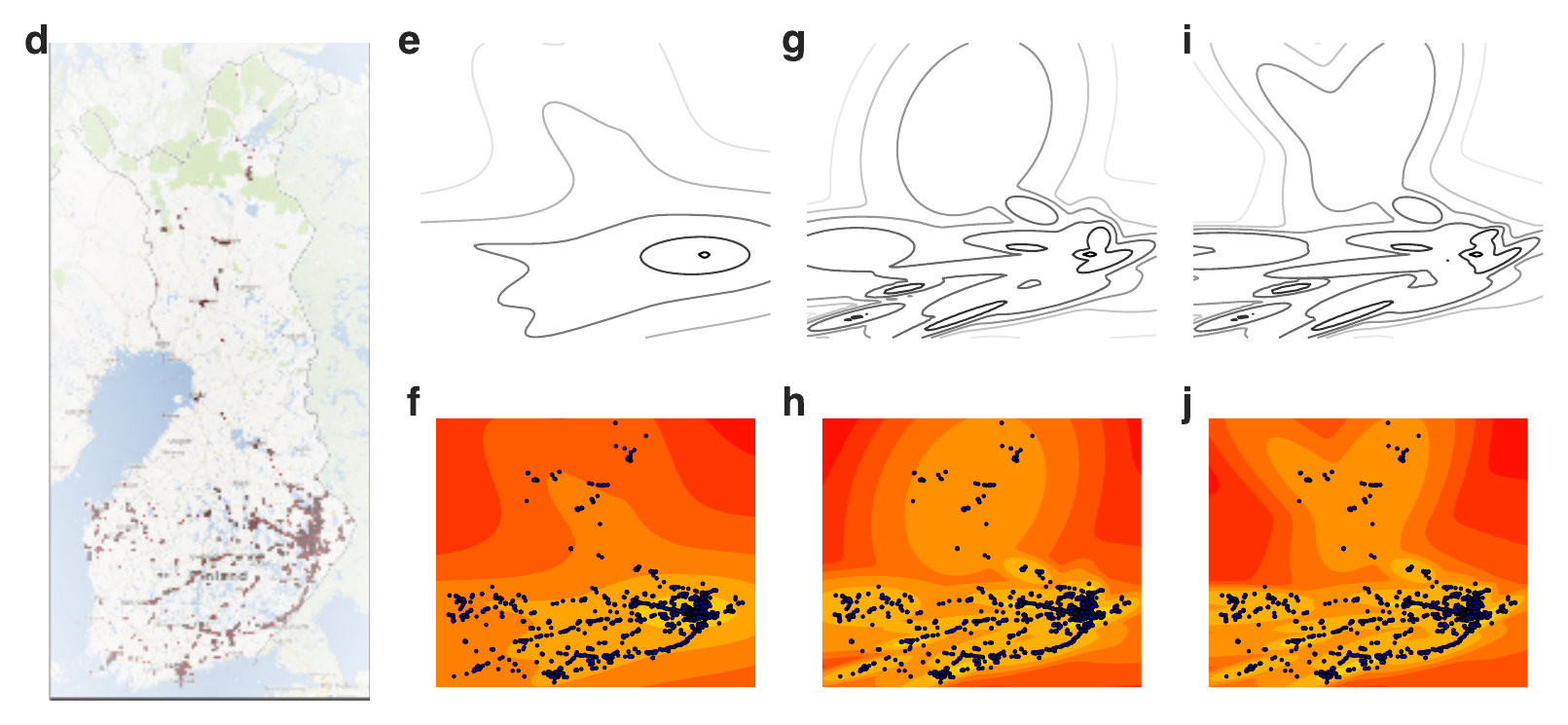}
	\caption{
		Geolocation dataset.
		Comparison of predictive likelihoods of \textbf{a}) mixed kernels; \textbf{b}) PGSM pure and mixed kernels; \textbf{c}) SAMS pure and mixed kernels.
		\textbf{d}) MOPSI geolocation dataset.
		\textbf{e}) Predictive density of Gibbs initialized from the disconnected configurations after one iteration;  \textbf{d}) with point data points overlayed.
		\textbf{g}) and \textbf{h}) Predictive density of mixed PGSM kernel after an equivalent amount of time.
		\textbf{i}) and \textbf{j}) Predictive density of disconnected Gibbs after $10^5$ seconds.
	}
	\label{fig:mopsi}
\end{figure}

\subsection{Inferring population structure in heterogeneous tumours}\label{sec:genomics}

The PyClone model \citep{roth2014pyclone} is designed to infer the proportion of cancer cells in a tumour sample which contain a mutation, which we refer to as the cellular prevalence of the mutation.
The input data consists of a set of digital measurements of allelic abundance which is assumed to be proportional to the true abundance of the allele in the sample.
The key factors which need to be deconvolved to convert this measurement to an estimate of cellular prevalence are that some cells derive from healthy (normal) tissue and the genomes of cancer cells contain multiple copies of a locus.
The model assumes that mutations will group by cellular prevalence due to the expansion of populations of genetically identical cells.
The number of populations is unknown, thus the PyClone model uses a DP prior with a Uniform($\left[0, 1\right]$) base measure.
The component parameters are interpreted as the cellular prevalence of the mutations associated with the component.

We show results on a dataset with 10,000 synthetic mutations in Figure~\ref{fig:cancer-results}. 
All methods except the pure SAMS kernel performed similarly in terms of predictive likelihood, while the pure SAMS kernel performed significantly worse (Figure~\ref{fig:cancer-results} \textbf{a}).
The pure PGSM and SAMS kernels outperformed the other methods in terms of V-measure, though the difference were small (Figures~\ref{fig:cancer-results} \textbf{b}).

As observed in the other domains, the performance of SAMS critically depends on mixing the kernel with GIBBS moves.
We show the data points for each replication of the pure split-merge kernels further supporting this point (Figure~\ref{fig:cancer-results} \textbf{c}).

\begin{figure}[t]
	\includegraphics{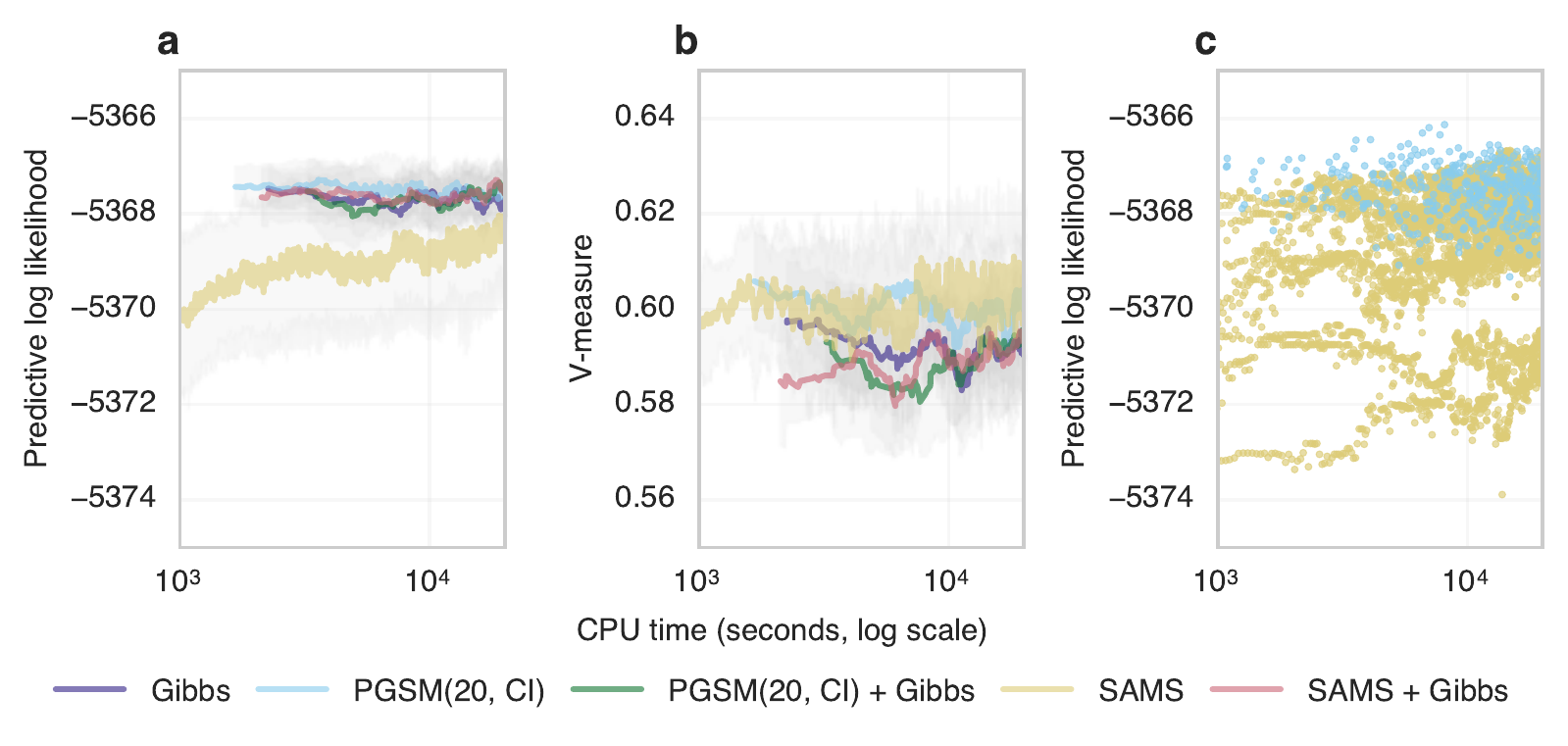}
	\label{fig:cancer-results}
	\caption{
		Clustering of cancer mutations from synthetic next-generation sequencing data.
		\textbf{a}) Comparison of predictive likelihood.
		\textbf{b}) Trace of log predictive likelihood for PGSM and SAMS pure kernels from 10 random restarts.
	}
\end{figure}

\section{Discussion}\label{sec:discussion}

We have proposed a new methodology to design efficient split-merge moves for Bayesian mixture models.
The method also generalizes to new types of moves useful for finite clustering models when $|s|>2$.
We have shown empirically that the proposed method is competitive in a range of clustering and likelihood models, including synthetic and real datasets from geolocation and genomics applications.

Our method, being based on the established PMCMC framework, opens up many directions for future improvements.
This includes applying recent advances in parallel implementations of SMC, for example via graphical processing units \citep{Lee2010GraphicCards}, or modifications of the SMC algorithm itself \citep{Jun2012Entangled,Murray2012GPU,Lee2014Forest}.

Another area of improvement comes from the development of resampling schemes tailored to discrete latent variables. In Algorithm \ref{alg:pgsm}, the number of possible distinct successors for each given particle is a small finite number (at most two if $|s|=2$ for example).
The complexity of the problem comes from the fact that a potentially long sequence of such decisions need to be made in order to split a cluster.
In these specific scenarios, custom PMCMC methods based on the early work of \cite{Fearnhead2003DiscretePF} have been developed in \cite{Whiteley2010Tech} and would provide futher improvement.

The fact that the state transitions $\partSupport(\cdot)$ have an absorbing state has both advantages and disadvantages. On the one hand it may cause Algorithm \ref{alg:pgsm} to be greedy, as explained in Section~\ref{sec:improved}. We have described in the same section a choice of intermediate and proposal distributions tailored to alleviate this issue. A potential alternative consists in designing a resampling distribution $\r$, which conditions on the survival of at least one representative of both a merge and a split. None of the existing resampling schemes have this property.
On the other hand, having an absorbing state has the advantage that if all particles simulated by Algorithm \ref{alg:pgsm} at some iteration $\t$ are equal to the merge absorbing state (i.e. $\x_\t^\p = \#2$ for all particle index $\p \in \{1, \dots, \N\}$), then there is no need to continue the computation of the particle filter for $\t' > \t$.

In standard applications of the PG algorithm, coalescence of the particle genealogy may cause slow mixing as noted in \cite{andrieu2010}.
The issue is that the particles $\xVec_\n^1, \xVec_\n^2, \dots, \xVec_\n^\N$ appearing in Algorithm \ref{alg:pgsm} have components at time $\t$ for $\t \ll \n$ which coincide with high probability with the components of the conditioning path.
This can be resolved using more sophisticated MCMC moves on the PG auxiliary variables \citep{Whiteley2010Discussion,Whiteley2010Tech,Lindsten2014PGAS}.
In our non-standard setup, this issue is partially mitigated by the fact that the order $\sigmaVec$ at which the particles are introduced is itself random.
Nonetheless, it would be interesting to implement these more advanced schemes to the problem at hand.

We have shown in Section~\ref{sec:genomics} a simple and effective method for handling models where each cluster component is governed by a non-conjugate model with a low-dimensional parameterization.
We leave for future work the extension of our method to higher dimensional non-conjugate likelihood models.
This problem can be approached, for example, by combining our method with the auxiliary variables described in \cite{neal2000}.

\section*{Acknowledgements}

Alexandre Bouchard-C\^{o}t\'{e}'s research was funded by an NSERC Discovery Grant.
Arnaud Doucet's research is partially supported by EPSRC grants EP/K000276/1 and EP/K009850/1.
Andrew Roth was partially supported by a CIHR CGS scholarship.
Computing was supported by WestGrid and Compute Canada.

\appendix

\section{Correctness of the decomposition into split-merge subproblems}\label{appendix:split-merge-correctness}

We present in this section the proof of correctness of the decomposition of the clustering into split-merge sub-problems.
The main tool used to prove this result is an auxiliary variable construction.
The auxiliary variable consists of a pair $(\s, \cMinus)$, where $\s$ is the set of anchors, and $\cMinus$ consists of the blocks of the partition $\c$ that do not contain anchor points:
\begin{eqnarray}
\cMinus \defeq \{\b\in\c : \b \cap \s = \emptyset\},
\end{eqnarray}
These intuitively correspond to the blocks of the partition that are forced to stay unchanged in this split-merge step.
We will view the split-merge step as a Gibbs step conditioning on $\cMinus, \s$.

A slight subtlety is that conditioning on the auxiliary variables not only forces the blocks in $\cMinus$ to stay constant; it also forces the other blocks to each contain at least one of the anchors.
See Figure~\ref{fig:support-example} for an example.
This leads to condition~\ref{assumption:2fromlemma} in Lemma~\ref{lemma:main}.

\begin{figure}[t]
	\begin{center}
		\includegraphics[width=5in]{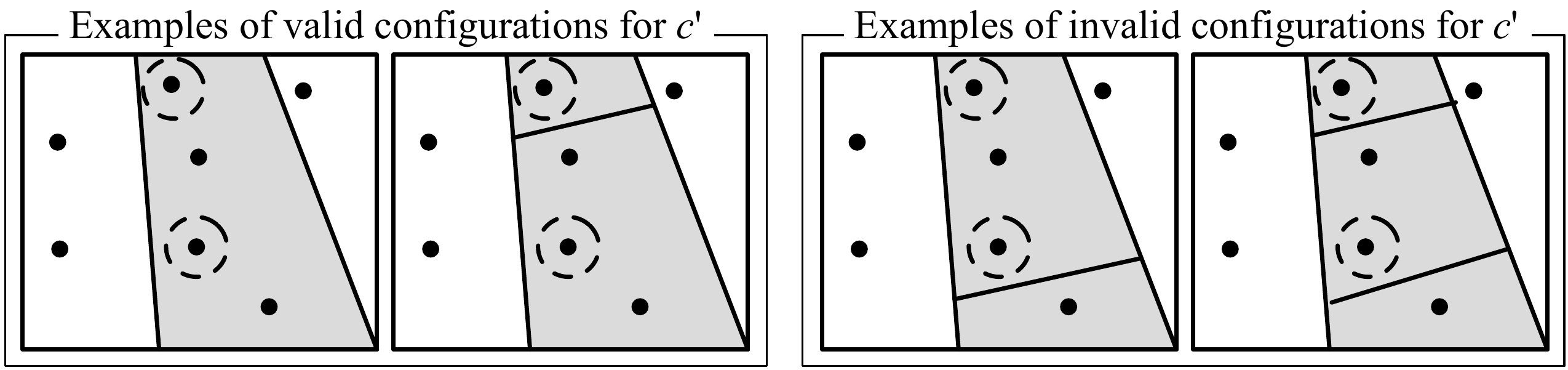}
		\caption{
			Assuming that the values of $\c, \cMinus$ and $\s$ are as in Figure~\ref{fig:example}, this illustrates some examples of configurations $\c'$ that are part of (left), or excluded from (right), the support of $\c'|\s, \cMinus$.
		}
		\label{fig:support-example}
	\end{center}
\end{figure}

\begin{lemma}\label{lemma:main}
	Let $\c, \c'$ denote two partitions of $[\T]$. Let $\s \subseteq [\T]$.
	Define $\cMinus \defeq \{\b \in \c : \s \cap \b = \emptyset\}$ and $\cMinus' \defeq \{\b \in \c' : \s \cap \b = \emptyset\}$.
	Then $\cMinus = \cMinus'$ if and only if the following two conditions hold:
	\begin{enumerate}
		\item $\c \cap \cMinus = \c' \cap \cMinus$, and, \label{assumption:1fromlemma}
		\item $\b \in \c' \backslash \cMinus \Longrightarrow \b \cap \s \neq \emptyset$.  \label{assumption:2fromlemma}
	\end{enumerate}
\end{lemma}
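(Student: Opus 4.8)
The plan is to prove the biconditional by a direct set-theoretic argument, viewing partitions as collections of blocks. The first thing I would record is that since $\cMinus$ is by definition a subcollection of $\c$, we have $\c \cap \cMinus = \cMinus$, so condition~\ref{assumption:1fromlemma} is equivalent to the simpler statement $\cMinus = \c' \cap \cMinus$, i.e.\ every anchor-free block of $\c$ reappears verbatim as a block of $\c'$. With this reformulation in hand I would split into the two implications and argue by membership chasing.

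For the ``if'' direction (assume conditions~\ref{assumption:1fromlemma} and~\ref{assumption:2fromlemma}, deduce $\cMinus = \cMinus'$) I would use double inclusion. For $\cMinus \subseteq \cMinus'$, take $\b \in \cMinus$: condition~\ref{assumption:1fromlemma} gives $\b \in \c'$, while $\b \cap \s = \emptyset$ holds because $\b \in \cMinus$, so $\b \in \cMinus'$ by definition. For the reverse inclusion $\cMinus' \subseteq \cMinus$, take $\b \in \cMinus'$, so that $\b \in \c'$ and $\b \cap \s = \emptyset$; the contrapositive of condition~\ref{assumption:2fromlemma} forces $\b \notin \c' \backslash \cMinus$, hence $\b \in \cMinus \subseteq \c$. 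This is precisely the step in which condition~\ref{assumption:2fromlemma} — the support restriction the preceding text flags as the ``subtlety'' — does its work: without it a block of $\c'$ could avoid every anchor yet fail to be a block of $\c$, breaking this inclusion.

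For the ``only if'' direction (assume $\cMinus = \cMinus'$, deduce the two conditions) both checks are immediate. For condition~\ref{assumption:1fromlemma}, any $\b \in \cMinus = \cMinus'$ lies in $\c'$, giving $\b \in \c' \cap \cMinus$, and the reverse containment $\c' \cap \cMinus \subseteq \cMinus$ is trivial; combined with $\c \cap \cMinus = \cMinus$ this yields $\c \cap \cMinus = \c' \cap \cMinus$. For condition~\ref{assumption:2fromlemma}, if some $\b \in \c' \backslash \cMinus$ had $\b \cap \s = \emptyset$, then $\b \in \cMinus' = \cMinus$, contradicting $\b \notin \cMinus$.

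Since every step reduces to an elementary membership argument, I do not expect a genuine obstacle. The only points requiring care are keeping straight that $\cMinus \subseteq \c$ (so condition~\ref{assumption:1fromlemma} collapses to $\cMinus = \c' \cap \cMinus$) and invoking condition~\ref{assumption:2fromlemma} in contrapositive form in the $\cMinus' \subseteq \cMinus$ inclusion, which is where the two conditions are genuinely doing more than bookkeeping.
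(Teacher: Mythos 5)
Your proof is correct and follows essentially the same route as the paper's: double inclusion for the ``conditions imply $\cMinus = \cMinus'$'' direction (using the contrapositive of condition~\ref{assumption:2fromlemma} for the inclusion $\cMinus' \subseteq \cMinus$), and a contradiction argument for condition~\ref{assumption:2fromlemma} in the converse. Your preliminary observation that $\c \cap \cMinus = \cMinus$, so that condition~\ref{assumption:1fromlemma} collapses to $\cMinus \subseteq \c'$, is a nice clarification the paper leaves implicit, but it does not change the substance of the argument.
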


\begin{proof}
	$(\Longrightarrow)$ Condition~\ref{assumption:1fromlemma} holds trivially.
	For condition~\ref{assumption:2fromlemma}, suppose (a) $\cMinus = \cMinus'$, (b), $\b \in \c' \backslash \cMinus$, but (c) $\b \cap \s = \emptyset$.
	By (b), $\b\in\c'$ and $\b\notin\cMinus$. This and (c) implies that $\b\in\cMinus'$.
	But this contradicts (a), so condition~\ref{assumption:2fromlemma} holds as well.
	
	$(\Longleftarrow)$ First, suppose $\b\in\cMinus$.
	By condition~\ref{assumption:1fromlemma}, $\b\in\cMinus'$.
	Therefore, $\cMinus \subseteq \cMinus'$.
	
	Second, suppose $\b\in\cMinus'$.
	By the contrapositive of condition~\ref{assumption:2fromlemma}, $\b \notin \c'\backslash \cMinus$.
	This point and $\cMinus' \subseteq \c'$ implies that $\b\in\cMinus$.
\end{proof}

We can now turn to the proof of Proposition~\ref{prop:split-merge-correctness}.
We copy its statement here for convenience:

\begin{proposition}
	If $\c \sim \pi$, then the output of Algorithm~\ref{alg:setup_split_merge}, $\c'$, satisfies $\c' \sim \pi$; i.e. the Markov kernel $\K(\c'|\c)$ induced by Algorithm~\ref{alg:setup_split_merge} is $\pi$-invariant.
\end{proposition}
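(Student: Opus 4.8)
The plan is to realize Algorithm~\ref{alg:setup_split_merge} as a Gibbs-type (partial resampling) update on an augmented state space, with $(\s,\cMinus)$ playing the role of auxiliary variables, and then to invoke the assumed $\piBar$-invariance of $\K$. Concretely, I would introduce the joint distribution $p(\c,\s) \defeq \pi(\c)\,\h(\s)$ over pairs consisting of a partition and an anchor set, treating $\h$ as a fixed distribution over $\s$ that does not depend on $\c$. Since $\h$ is a probability distribution, the $\c$-marginal of $p$ is exactly $\pi$, so it suffices to exhibit a kernel on $(\c,\s)$ that leaves $p$ invariant and whose $\c$-marginal coincides with $\K$.

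The central step is to identify the conditional law of $\c$ under $p$ given the deterministic statistics $\s$ and $\cMinus = \{\b\in\c : \b\cap\s=\emptyset\}$. Here Lemma~\ref{lemma:main} does the essential bookkeeping: it shows that the partitions $\c'$ sharing the given value of $\cMinus$ are exactly those that (i) contain every block of $\cMinus$, and (ii) have all their remaining blocks intersecting $\s$. Consequently the support of $\c \mid \s,\cMinus$ is in bijection, via $\c \leftrightarrow \cBar' = \c\setminus\cMinus$, with the support of $\piBar$, namely partitions of the closure $\sBar$ whose blocks each contain an anchor. I would then match the conditional density: starting from $\pi(\c) \propto \tauI(\nclust)\prod_{\b\in\c}\tauII(|\b|)\L(\yVec_\b)$, factor out the contribution of the blocks in $\cMinus$, which is constant once we condition, and rewrite $\tauI(|\c'|) = \tauI(|\cBar'| + |\cMinus|)$. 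Using $|\cMinus| = \nclust - |\cBar|$, this is precisely $\tauIBar(|\cBar'|)$, so the conditional collapses to $\piBar(\cBar')$ up to normalization.

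With this identification in hand, the proof concludes by the standard invariance argument for Metropolis-within-Gibbs moves. Drawing $\s\sim\h$, forming $\cMinus$, resampling $\cBar' \sim \K(\cdot\mid\cBar)$, and setting $\c' = \cBar'\cup\cMinus$ is exactly an update that holds $(\s,\cMinus)$ fixed while refreshing $\c$ through a $\piBar$-invariant kernel. Condition (ii) of Lemma~\ref{lemma:main}, which is guaranteed because the support of $\piBar$ forces each block of $\cBar'$ to contain an anchor, is what certifies that the value of $\cMinus$ recomputed from $(\c',\s)$ again equals $\cMinus$; hence the conditioning statistics really are preserved and the move is a legitimate conditional refresh. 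Since $\K$ leaves $\piBar$ invariant, the augmented kernel leaves $p$ invariant, and marginalizing out $\s$ then yields $\pi$-invariance of $\K(\c'\mid\c)$.

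I expect the main obstacle to be the careful verification of the support-and-density correspondence in the second step, in particular making the reindexing $\tauI(\nclust)\to\tauIBar(|\cBar'|)$ exact and confirming that condition (ii) of Lemma~\ref{lemma:main} is genuinely \emph{enforced} by the support constraint $\1[\b\cap\s\neq\emptyset]$ in $\piBar$ rather than merely assumed. A secondary point to treat cleanly is the hypothesis that $\h$ does not depend on $\c$: if it did, both the $\c$-marginal identity and the conditional computation would require adjustment, which is presumably why the paper freezes the adaptation of its informed anchor proposals in order to preserve invariance.
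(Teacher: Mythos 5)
Your proposal is correct and follows essentially the same route as the paper's own proof: you augment with the auxiliary pair $(\s,\cMinus)$, invoke Lemma~\ref{lemma:main} to characterize the support of the conditional, identify that conditional with $\piBar$ via the reindexing $\tauI(\nclust)\to\tauIBar(|\cBar'|)$, and conclude by the standard invariance of a conditional (Metropolis-within-Gibbs) refresh. The only cosmetic difference is that the paper formalizes the argument as a composition of three kernels $\K_1,\K_2,\K_3$ on the augmented space, whereas you phrase it directly as a partial-resampling update; the substance, including the observation that the support constraint of $\piBar$ is what guarantees the conditioning statistics are preserved, is the same.
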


\begin{proof}
	Consider the model augmented with the auxiliary variables $\s$ and $\cMinus$ (see Figure~\ref{fig:graphical-models}(a)), defined formally using the following auxiliary distribution:
	\begin{equation}
	\piTilde(\s, \cMinus, \c) \defeq \pi(\c) \h(\s) \1[\cMinus = \cMinus(\s,\c)],
	\end{equation}%
	where $	\cMinus(\s, \c) \defeq \{\b\in\c : \b \cap \s = \emptyset\}$. Note that this auxiliary distribution admits the target distribution as a marginal:
	\begin{eqnarray}
	\sum_\s \sum_{\cMinus} \piTilde(\s, \cMinus, \c) &=& \pi(\c) \sum_\s \h(\s) \sum_{\cMinus}\1[\cMinus = \cMinus(\s,\c)]  \\
	&=& \pi(\c) \sum_\s \h(\s) = \pi(\c), \nonumber
	\end{eqnarray}
	where the sum over $\cMinus$ is over all sets of subsets of $[\T]$, and the sum over $\s$ is over all subsets of $[\T]$.
	We used the fact that only one $\cMinus$ satisfies $\cMinus(\s,\c) = \cMinus$, and that $\h$ is a probability mass function.

	Next, we introduce three kernels with inputs and outputs denoted by:
	\begin{equation}
	\c \labelledmapsto{\K_1} (\s, \cMinus, \c) \labelledmapsto{\K_2} (\s, \cMinus, \c') \labelledmapsto{\K_3} \c'.
	\end{equation}
	
	These kernels play the following roles:
	\begin{itemize}
		\item $\K_1$ samples the auxiliary variables according to $\piTilde(\s, \cMinus\mid\c)$, while keeping $\c$ fixed,
		\item $\K_2$ performs a Metropolis-within-Gibbs step on $\c$ targeting the auxiliary distribution $\piTilde$,
		\item $\K_3$ deterministically projects the triplet back to the original space, retaining only the clustering $\c$.
	\end{itemize}
	
	Formally:
	\begin{eqnarray}
	\K_1(\s', \cMinus', \c'\mid \c) &\defeq& h(\s') \1[\c = \c'] \1[\cMinus' = \cMinus(\s, \c)], \\
	\K_2(\s', \cMinus', \c'\mid \s, \cMinus, \c) &\defeq& \piTilde(\c' | \s, \cMinus) \1[\s' = \s] \1[\cMinus' = \cMinus], \nonumber \\
	\K_3(\c'|\s, \cMinus, \c) &\defeq& \1[\c' = \c]. \nonumber
	\end{eqnarray}%
	Since $\piTilde$ admits $\pi$ as a marginal, the composition of $\K_1, \K_2,$ and $\K_3$ is clearly $\pi$-invariant.
	It is therefore enough to show that when $\cMinus = \cMinus(\s, \c)$ where $\c$ is a valid partition of $[\T]$, sampling from $\K_2$ is equivalent to sampling from the Markov kernel $\K(\c'|\c)$ induced by Algorithm~\ref{alg:setup_split_merge}:
	\begin{eqnarray}\label{eq:equivalence-last}
	\piTilde(\c'\mid \s, \cMinus) &\propto& \pi(\c') \1[\cMinus = \cMinus(\s, \c')] \\
	&=&  \pi(\c') \1[\cMinus(\s, \c) = \cMinus(\s, \c')] \nonumber \\
	&=& \tauI(|\c'|) \left( \prod_{\b\in\c'} \tauII(|\b|) \L(\yVec_\b) \right) \1[\cMinus(\s, \c) = \cMinus(\s, \c')]. \nonumber
	\end{eqnarray}
	
	Using Lemma~\ref{lemma:main}, we now rewrite the support as follows:
	\begin{eqnarray}
	\1[\cMinus(\s, \c') = \cMinus(\s, \c)] &=& \1[\c \cap \cMinus = \c' \cap \cMinus] \1[\b \in \c' \backslash \cMinus \Longrightarrow \b \cap \s \neq \emptyset].
	\end{eqnarray}
	Let now $\cBar' = \c' \backslash \cMinus$. Plugging in the last line of Equation~(\ref{eq:equivalence-last}), we obtain:
	\begin{eqnarray}\label{eq:last}
	\piTilde(\c'\mid \s, \cMinus) &=& \tauIBar(|\cBar'|) \left( \prod_{\b\in\cBar'} \tauII(|\b|) \L(\yVec_\b) \right) \1[\c \cap \cMinus = \c' \cap \cMinus] \1[\b \cap \s \neq \emptyset]   \\
	&=& \piBar(\cBar') \1[\c \cap \cMinus = \c' \cap \cMinus], \nonumber
	\end{eqnarray}
	where $\piBar(\cBar')$ is defined in Equation (\ref{eq:piBar}). Since Algorithm~\ref{alg:setup_split_merge} does not change the clustering of points outside of $\sBar$ (line \ref{step:create-final-cluster} of Algorithm~\ref{alg:setup_split_merge}), it follows that the indicator 
	function in the last line of Equation~(\ref{eq:last}) is equal to one.
\end{proof}

\section{Correctness of particle Gibbs for split merge}\label{appendix:pg-correctness}

We provide here the proof of Proposition~\ref{prop:pg-correctness}.
The main steps in the proof follow a structure similar to the proof of Proposition~\ref{prop:split-merge-correctness}.

\begin{proposition}
	Under Assumption~\ref{assumption:support}, \ref{assumption:bijection}, and \ref{assumption:intermediate-final}, and if $\cBar \sim \piBar$, then the output of Algorithm~\ref{alg:pgsm}, $\cBar'$, satisfies $\cBar' \sim \piBar$ for any $N\geq2$; i.e. the Markov kernel $\KBar(\cBar'|\cBar)$ induced by Algorithm~\ref{alg:pgsm} is $\piBar$-invariant.
\end{proposition}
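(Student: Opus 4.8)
The plan is to mirror the three-kernel decomposition used in the proof of Proposition~\ref{prop:split-merge-correctness}, augmenting the state with the entire collection of auxiliary variables generated by the conditional SMC sweep of Algorithm~\ref{alg:pgsm}. Concretely, I would introduce an extended target $\tilde\pi$ over the tuple consisting of the permutation $\sigmaVec$, all particles $\{\xVec_\t^\p : \t \in \{1,\dots,\n\},\ \p \in \{1,\dots,\N\}\}$, all ancestor variables $\{\ancestor_\t^\p\}$, and the index $J \in \{1,\dots,\N\}$ of the trajectory selected on line~\ref{alg:line:sample_particlepath}. Following \cite{andrieu2010}, $\tilde\pi$ is defined as the density of the selected reference trajectory under $\piBar$ (pulled back through the bijection of Assumption~\ref{assumption:bijection}), multiplied by the conditional law of all remaining particles and ancestors under an ordinary unconditional SMC run, together with the uniform permutation factor arising from Algorithm~\ref{alg:sample_permmutation}. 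This $\tilde\pi$ plays the role of the augmented distribution $\piTilde(\s,\cMinus,\c)$ in the proof of Proposition~\ref{prop:split-merge-correctness}.

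The first key step is to verify that $\tilde\pi$ admits $\piBar$ as the marginal law of $\phi^\sigmaVec(\xVec_\n^{J})$. This is exactly where the three assumptions enter: Assumption~\ref{assumption:support} guarantees that every incremental weight in Equation~(\ref{eq:particle-weights}) is well-defined, so $\tilde\pi$ is a proper density; Assumption~\ref{assumption:bijection} identifies each terminal particle with a unique clustering in $\support(\piBar)$, which also makes the inverse map $(\phi^\sigmaVec)^{-1}(\cBar)$ unambiguous; and Assumption~\ref{assumption:intermediate-final} ensures $\gamma^\sigmaVec_\n \propto \piBar \circ \phi^\sigmaVec$, so that the terminal normalized weight $\w_\n^{J}$ times the reference density telescopes to $\piBar$ once the non-selected particles and ancestors are marginalized out. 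This marginalization is the standard unbiasedness-of-the-normalizing-constant computation for SMC, and it closes the analogue of the ``$\piTilde$ admits $\pi$ as a marginal'' step.

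The second step is to exhibit Algorithm~\ref{alg:pgsm} as a composition of $\tilde\pi$-invariant kernels, exactly as $\K_1,\K_2,\K_3$ in Proposition~\ref{prop:split-merge-correctness}. Here $\K_1$ samples $\sigmaVec$ and, conditionally on the reference path being pinned to particle index $\p=1$ via $\xVec_\n^1 = (\phi^\sigmaVec)^{-1}(\cBar)$, fills in all remaining particles and ancestors by running the conditional SMC forward (up through line~\ref{step:end-of-aux-vars}); $\K_2$ resamples the selected index $J$ from the terminal weights (line~\ref{alg:line:sample_particlepath}); and $\K_3$ deterministically projects back through $\phi^\sigmaVec$ (line~\ref{step:update-part}), retaining only $\cBar'$. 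The crucial observation is the exchangeability of the construction: conditionally on $J$ and on $\xVec_\n^{J}$, the remaining variables under $\tilde\pi$ are distributed precisely as a conditional SMC sweep with reference trajectory $\xVec_\n^{J}$, so each of $\K_1,\K_2,\K_3$ is a genuine conditional-sampling step for $\tilde\pi$ and their composition is $\tilde\pi$-invariant. Since $\cBar \sim \piBar$ by hypothesis, the initial reference path is a draw from the $\phi^\sigmaVec$-pullback of $\piBar$, and invariance of the marginal then yields $\cBar' \sim \piBar$ for every $\N \ge 2$.

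I expect the main obstacle to be the bookkeeping around adaptive resampling rather than the conceptual structure. Because resampling is triggered only when the relative ESS falls below $\ressThreshold$, the ancestor variables $\ancestors$ are sometimes the deterministic identity map and sometimes a draw from $\r$ (line~\ref{alg:line:resampling}), so $\tilde\pi$ must encode this data-dependent branching consistently; I would handle this by appealing to \cite{Lee2011}, which establishes that adaptive resampling preserves PG invariance provided the resampling decision is a deterministic function of the current weights, as it is here. A secondary subtlety is the auxiliary permutation $\sigmaVec$: it is resampled afresh on each call and the bijection $\phi^\sigmaVec$ depends on it, so $\sigmaVec$ must be carried inside the extended state throughout the $\K_1,\K_2,\K_3$ sweep and only marginalized at the very end, which the uniform permutation factor from Algorithm~\ref{alg:sample_permmutation} makes routine.
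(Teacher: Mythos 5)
Your proposal is correct, and it shares the paper's overall strategy: augment $\cBar$ with auxiliary variables, write Algorithm~\ref{alg:pgsm} as a composition of kernels invariant for an extended distribution, then project back, exactly as in the proof of Proposition~\ref{prop:split-merge-correctness}. The genuine difference is in what gets black-boxed. The paper augments $\cBar$ with $\sigmaVec$ and $\g = (\ancestors_{2:\n}, \xVec_{1:\n}^{1:\N}, k)$, takes its middle kernel $\KBar_2$ to be the \emph{entire} PG step (conditional SMC sweep plus index selection), and settles its $\piBar$-invariance in one line by citing Theorem~5(a) of \cite{andrieu2010}, with Assumption~\ref{assumption:bijection} used only to make the conditioned path $\left(\phi^\sigmaVec\right)^{-1}(\cBar)$ well-defined and \cite{Lee2011} invoked for adaptive resampling. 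You instead split that step into two kernels (conditional SMC fill-in, then selection of $J$ from the terminal weights) and re-derive PG invariance yourself: you construct the extended target $\tilde{\rawpi}$ over all particles, ancestors, and the selected index, show that $\piBar$ is its marginal via the unbiasedness-of-the-normalizing-constant computation, and recognize each kernel as a conditional (Gibbs) step for $\tilde{\rawpi}$, using exchangeability of particle indices to justify pinning the reference path at index $1$ --- precisely the role played in the paper's proof by the independent shuffling of particle indices at each generation and the reference to \cite{ChopinSingh2012}. In effect, you have unfolded the proof of the theorem the paper cites. The paper's route buys brevity and delegates the delicate bookkeeping (the exact extended-target density, including the uniform lineage-indexing factor that your sketch elides) to \cite{andrieu2010}; your route buys a self-contained argument that makes visible exactly where Assumptions~\ref{assumption:support} and \ref{assumption:intermediate-final} enter. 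Both treatments of the data-dependent resampling decision correctly rest on \cite{Lee2011}, so I see no gap, only sketch-level details to pin down if you were to write the extended target explicitly.
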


\begin{proof}
	We augment the model $\cBar$ with the auxiliary variables $\sigmaVec$ and $\g$ (see Figure~\ref{fig:graphical-models}(b)), defined as:
	\begin{enumerate}
		\item $\sigmaVec$ is distributed according to the output of Algorithm~\ref{alg:sample_permmutation}, defined in Section~\ref{sec:pgsm-overview}.
		\item Given $\sigmaVec$ and $\cBar$, the variables $\g = (\ancestors_{2:\n}, \xVec_{1:\n}^{1:\N}, k)$ are distributed according to the specification of Algorithm~\ref{alg:pgsm}, with the exception that all particle indices are shuffled according to an independent permutation of $\{1, \dots, \N\}$ at each generation.
		Here $k$ is the index of the particle sampled at iteration $\n$ (on line (\ref{alg:line:sample_particlepath})).
	\end{enumerate}
	
	\begin{figure}[t]
		\begin{center}
			\includegraphics[width=3in]{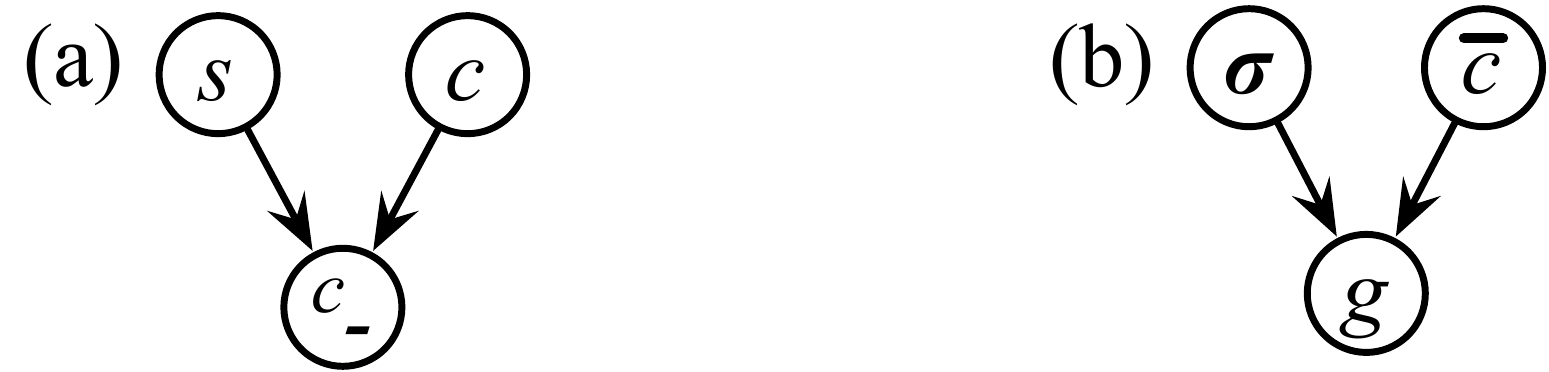}
			\caption{
				Graphical models of the auxiliary variables used in the correctness proofs.
				The structure of the dependencies give an intuitive justification that the original model can be recovered as a marginal in both cases, as there are not directed path from the auxiliary variables to the original variables.
				(a) In Appendix~\ref{appendix:split-merge-correctness}, the auxiliary variables are $\s$ and $\cMinus$, and the original variable is $\c$.
				(b) In Appendix~\ref{appendix:pg-correctness}, the auxiliary variables are $\sigmaVec$ and $\g$, and the original variable is $\cBar$.
			}
			\label{fig:graphical-models}
		\end{center}
	\end{figure}
	
	Next, we introduce three kernels with inputs and outputs denoted by:
	\begin{equation}
	\cBar \labelledmapsto{\KBar_1} (\sigmaVec, \cBar) \labelledmapsto{\KBar_2} (\sigmaVec, \g, \cBar') \labelledmapsto{\KBar_3} \cBar'.
	\end{equation}
	
	These kernels play the following roles:
	\begin{itemize}
		\item $\KBar_1$ samples the permutation $\sigmaVec$ while keeping the auxiliary variables $\cBar$ fixed,
		\item $\KBar_2$ samples $g$ using the PG step then sets $\cBar'$ to $\phi^\sigmaVec(\xVec^k_\n)$,
		\item $\KBar_3$ deterministically projects the triplet back to the original space, retaining only the restricted clustering $\cBar'$.
	\end{itemize}
	
	The kernel $\KBar_2$ is equivalent to a standard PG algorithm.
	Assumption~\ref{assumption:support}, \ref{assumption:intermediate-final}, and Theorem~5(a) of \cite{andrieu2010} imply that $\KBar_2$ is $\piBar$-invariant (and in fact, irreducible).
	Assumption~\ref{assumption:bijection} ensures that the computation of the conditioned path is well-defined.
\end{proof}

\section{Construction of the bijection}\label{appendix:bijection}

We provide here the proof of Proposition~\ref{prop:bijection}:

\begin{proposition}
	For any permutation $\sigmaVec$ satisfying $\{\sigma_1, \sigma_2\} = \s$, there is a bijective map $\phi^\sigmaVec$ from the space of particles respecting the transition constraints, $\partSupport_\n$, to the support of the restricted target, $\support(\piBar)$.
\end{proposition}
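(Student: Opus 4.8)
The plan is to exhibit $\phi^\sigmaVec$ together with an explicit inverse, after first giving a clean combinatorial description of both finite sets. Write $\sigma_1 = i_1$ and $\sigma_2 = i_2$ for the two anchors. The transition diagram of Figure~\ref{fig:local-state-space} forces every particle in $\partSupport_\n$ to have the shape $(\#1, \x_2, \dots, \x_\n)$ with $\x_2 \in \{\#2, \#4\}$: either (i) $\x_2 = \#2$, whose absorbing property pins $\x_\t = \#2$ for all $\t \ge 2$ (the unique \emph{merge} path), or (ii) $\x_2 = \#4$, after which each $\x_\t$ with $\t \ge 3$ is a free choice in $\{\#3, \#4\}$. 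Hence $\partSupport_\n$ is one merge path together with $2^{\n-2}$ split paths. On the other side, as recalled after Equation~(\ref{eq:piBar}), a clustering lies in $\support(\piBar)$ iff every block meets $\s$, so (since $|\s| = 2$) it is either $\{\sBar\}$ or a pair $\{B_1, B_2\}$ with $i_1 \in B_1$, $i_2 \in B_2$ and the $\n-2$ non-anchor points assigned freely — one merge clustering and $2^{\n-2}$ split clusterings. The two cardinalities already agree.

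Next I define $\phi^\sigmaVec$ inductively through the intermediate clusterings $\cBar_\t = \phi^{\boldsymbol{\sigma}_{1:\t}}(\xVec_\t)$ of $\{\sigma_1, \dots, \sigma_\t\}$, maintaining the invariant that $\cBar_\t$ has at most two blocks, each containing an anchor among those introduced so far. Set $\cBar_1 = \{\{\sigma_1\}\}$. For $\t \ge 2$ I read the second line of $\x_\t$: state $\#2$ adds $\sigma_\t$ to the unique merged block, while $\#3$ and $\#4$ add $\sigma_\t$ to the block of the first, respectively second, anchor (per the encoding of Figure~\ref{fig:local-state-space}). The only split successor of $\#1$ is $\#4$, so the split regime is necessarily entered at $\t=2$ through $\#4$, consistent with $\sigma_2 = i_2$ seeding its own block $B_2$. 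The transition constraints then preserve the invariant: once $\#2$ is entered we remain merged, and in the split regime both anchors have already seeded distinct blocks by $\t=2$. Taking $\t = \n$ shows $\phi^\sigmaVec$ maps $\partSupport_\n$ into $\support(\piBar)$, since $\cBar_\n$ is a clustering of all of $\sBar$ with every block containing an anchor.

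To finish, I construct the inverse $\psi^\sigmaVec : \support(\piBar) \to \partSupport_\n$ directly. Given $\{\sBar\}$, output the merge path $(\#1, \#2, \dots, \#2)$. Given a split $\{B_1, B_2\}$ with $i_1 \in B_1$, $i_2 \in B_2$, output $(\#1, \#4, \x_3, \dots, \x_\n)$ where $\x_\t = \#3$ if $\sigma_\t \in B_1$ and $\x_\t = \#4$ if $\sigma_\t \in B_2$ (note $\sigma_2 = i_2 \in B_2$ gives $\x_2 = \#4$, so this is consistent). One checks immediately that $\psi^\sigmaVec$ respects every transition, hence lands in $\partSupport_\n$, and that both composites $\phi^\sigmaVec \circ \psi^\sigmaVec$ and $\psi^\sigmaVec \circ \phi^\sigmaVec$ are the identity, because the block membership of each point is exactly what $\phi^\sigmaVec$ reads off and what $\psi^\sigmaVec$ records. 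Alternatively, since the cardinalities coincide, it would suffice to verify injectivity of $\phi^\sigmaVec$, which is the same observation.

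I expect the only delicate point to be the bookkeeping at the base steps $\t = 1, 2$ — in particular justifying that the split regime is entered through $\#4$ (and not $\#3$) because $\sigma_2 = i_2$ must seed the second block, and that $\#1$ serves only to initialize the first anchor. Everything past $\t = 2$ is a routine induction in which allocation decisions and block memberships are in explicit one-to-one correspondence.
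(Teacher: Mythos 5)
Your proof is correct and takes essentially the same approach as the paper's: an explicit bijection that reads block membership off the allocation states, together with an explicit inverse assigning $\#2$, $\#3$, or $\#4$ according to whether $|\cBar| = 1$ or to which anchor $\sigma_\t$ is clustered with. The only differences are cosmetic: you define the forward map inductively through the intermediate clusterings (which, incidentally, accounts for the membership of $\sigma_1$ a bit more explicitly than the paper's closed-form $\overline{\sigma}_i$ formula) and you add a cardinality count that the paper omits.
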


\begin{proof}
	Consider the following mapping:
	\begin{equation}
	\phi^\sigmaVec(\xVec_\t) \defeq \bracearraycond{
		\{\{\sigma_1, \dots, \sigma_\t\}\}&\;\;\textrm{if }\x_\t \in \{\#1, \#2\}, \\
		\{\overline{\sigma}_1(\xVec_\t), \overline{\sigma}_2(\xVec_\t)\}&\;\;\textrm{otherwise,}
	}
	\end{equation}
	where $\overline{\sigma}_i(\xVec_\t) \defeq \{\sigma_{\t'} : \x_{\t'} = \#(2+i), 1 \le \t' \le \t\}$. It is easy to check that it has an inverse given by:
	\begin{equation}
	\left(\left(\phi^\sigmaVec\right)^{-1}(\cBar)\right)_\t \defeq \bracearraycond{
		\#1&\;\;\textrm{if }\t = 1, \\
		\#2&\;\;\textrm{if }\t > 1, |\cBar| = 1, \\
		\#3&\;\;\textrm{if }\t > 1, |\cBar| > 1, \sigma_1 \isclusteredwith_{\cBar} \sigma_\t, \\
		\#4&\;\;\textrm{if }\t > 1, |\cBar| > 1, \sigma_2 \isclusteredwith_{\cBar} \sigma_\t,
	}
	\end{equation}
	where $\sigma_i \isclusteredwith_{\cBar} \sigma_j$ means that $\y_{\sigma_i}$ is in the same block as $\y_{\sigma_j}$ for the clustering $\cBar$.
	By the construction of the support of $\piBar$, exactly one of the four cases above holds when $\cBar \in \support(\piBar)$.
	
\end{proof}

\section{Generalization to $|\s| > 2$}\label{appendix:generalization}

We describe here the algorithmic implications of increasing the number of anchor points, $|\s|$, to some  constant greater than two.
This constant should be selected so that the number of partitions of $|\s|$ points is much lower than the number of particles.

The algorithm is generally unchanged, with the following exceptions:

\begin{enumerate}
	\item Algorithm~\ref{alg:sample_permmutation} is modified to sample $(\sigma_1, \sigma_2, \dots, \sigma_{|\s|})$ uniformly over the permutations of $\s$, and $(\sigma_{|\s|+1}, \dots, \sigma_\n)$, over the permutations of $\sBar \backslash \s$,
	\item as before, the local allocation state space $\X$ can be viewed as a pair each containing a partition and a block in this partition (see Figure~\ref{fig:local-state-space}).
	In the case where $|\s| = 2$, the partitions are taken from the union of the set of partitions of a set of size one with the set of partitions of a set of size two. When $|\s| > 2$, we add more states, corresponding to partitions of a set of size three, etc. until we add states corresponding to partitions of a set of size $|\s|$.
	The support of the transition $\partSupport$ consists in (a) edges $\x \to \x'$ linking a state $\x'$ such that removing one element from one of its blocks yields $\x$, and (b) edges $\x \to \x'$ where $\x$ and $\x'$ correspond to the same partition of a set of size $|\s|$.
	This is a generalization of the case $|\s| = 2$ shown in Figure~\ref{fig:local-state-space}.
	The mapping $\phi^\sigmaVec$ is generalized in the obvious way,
	\item in Section~\ref{sec:improved} the following equations are substituted,
	\begin{enumerate}
		
		\item $\t \in \{1, 2\} \rightarrow \t \in \{1, 2, \dots, |\s|\}$,
		\item $\t = 2 \rightarrow \t \in \{2, \dots, |\s|\}$,
		\item $\t > 2 \rightarrow \t > |\s|$,
		\item $\Delta \schedule \defeq (n - 2)^{-1} \rightarrow \Delta \schedule \defeq (n - |\s|)^{-1}$.
	\end{enumerate}
\end{enumerate}

\section{Models}\label{appendix:models}

\subsection{Multivariate normal}

The first likelihood we use is the multivariate normal (MVN) with density denoted $\mathcal{N} (y| \mu, \Sigma)$.
We specify a normal inverse Wishart (NIW) prior for the mean and covariance parameters with density denoted $\mathcal{N}I\mathcal{W} (\mu, \Sigma | \nu, r, u, S)$.
The densities are given by
\begin{eqnarray}
\mathcal{N}I\mathcal{W} (\mu, \Sigma | \nu, r, u, S) & = & \mathcal{N} \left( \mu |u, \frac{1}{r} \Sigma \right) \mathcal{I}\mathcal{W} (\Sigma | \nu, S),  \\
\mathcal{N} (y| \mu, \Sigma) & = & \frac{1}{(2 \pi)^{\frac{D}{2}} | \Sigma |^{\frac{1}{2}}} \exp \left( - \frac{1}{2} (y - \mu)^T \Sigma^{- 1}  (y - \mu) \right), \nonumber \\
\mathcal{I}\mathcal{W} (\Sigma | \nu, S) & = & \frac{| S |^{\frac{\nu}{2}}}{2^{\frac{\nu p}{2}} \Gamma_D \left( \frac{\nu}{2} \right)}  | \Sigma |^{- \frac{\nu + p + 1}{2}} \exp \left( - \frac{1}{2} \text{tr} (S \Sigma^{- 1}) \right), \nonumber
\end{eqnarray}
where $\Gamma_D (x) = \pi^{\frac{D (D - 1)}{4}}  \prod_{d = 1}^D \Gamma \left(x + \frac{d - 1}{2} \right)$.

We use the following priors for all experiments $(\nu, r, u, S) = (\nu_0, r_0, u_0, S_0) = (2 + D, 1, \tmmathbf{0}, \tmmathbf{I})$, where $\tmmathbf{0}$ is the $D$ dimensional vector of zeros, and $\tmmathbf{I}$ is the $D$ dimensional identity matrix.
The posterior distribution of $\mu, \Sigma$ given $\tmmathbf{y}= (y_1, \ldots, y_m)$ is $\mathcal{N}\mathcal{I}W (\mu, \Sigma |\nu_m, r_m, u_m, S_m)$ where
\begin{eqnarray}
\nu_m & = & \nu_0 + m,  \\
r_m & = & r_0 + m, \nonumber\\
u_m & = & \frac{r_0 u_0 + \sum_{i = 1}^m y_i}{r_m}, \nonumber \\
S_m & = & S_0 + \sum_{i = 1}^m y_i y_i^T + r_0 u_0 u_0^T - r_m u_m u_m^T. \nonumber
\end{eqnarray}
For computational efficiency it is convenient to express these updates iteratively using the following equations:
\begin{eqnarray}
\nu_m & = & \nu_{m - 1} + 1, \\
r_m & = & r_{m - 1} + 1, \nonumber\\
u_m & = & \frac{r_{m - 1} u_{m - 1} + y_m}{r_m}, \nonumber\\
S_m & = & S_{m - 1} + \frac{r_m}{r_{m - 1}}  (y_m - u_m)  (y_m - u_m)^T. \nonumber
\end{eqnarray}
Using these equations the Cholesky decomposition of $S_0$ can be performed once using \ $O (D^3)$ operations and cached.
This decomposition can then be updated using $m$ rank one updates, each requiring $O (D^2)$ operations, to obtain $S_m$.
This allows for efficient evaluation of the marginal and predictive likelihoods as $| S_m |$ can be evaluated using $O (D)$ operations using the Cholesky decomposition, instead of the standard $O (D^3)$ operations.

The marginal likelihood for the MVN-NIW congugate pair is
\begin{eqnarray}
L (\tmmathbf{y}) & = & \int \prod_{i = 1}^m L (y_i | \theta) H (\mathd \theta)  \\
& = & \int \prod_{i = 1}^m \mathcal{N} (y_i | \mu, \Sigma) \mathcal{N}I\mathcal{W} (\mu, \Sigma | \nu, r, u, S) \ud\mu \ud\Sigma \nonumber \\
& = & \frac{1}{\pi^{\frac{m D}{2}}} \frac{r_0^{\frac{D}{2}}}{r_m^{\frac{D}{2}}}  \frac{| S_0 |^{\frac{\nu_0}{2}}}{| S_m |^{\frac{\nu_m}{2}}}  \frac{\prod_{d = 1}^D \Gamma \left( \frac{\nu_m + d - 1}{2} \right)}{\prod_{d = 1}^D \Gamma \left(\frac{\nu_0 + d - 1}{2} \right)}. \nonumber
\end{eqnarray}
The predictive likelihood is given by
\begin{eqnarray}
L (\tmmathbf{y}^+ |\tmmathbf{y}^-) & = & \frac{L (y_1, \ldots, y_m)}{L(y_1, \ldots, y_{m - 1})}  \\
& = & \frac{1}{\pi^{\frac{D}{2}}}  \frac{r_{m - 1}^{\frac{D}{2}}}{r_m^{\frac{D}{2}}}  \frac{| S_{m - 1} |^{\frac{\nu_{m - 1}}{2}}}{| S_m |^{\frac{\nu_m}{2}}}  \frac{\prod_{d = 1}^D \Gamma \left(\frac{\nu_m + d - 1}{2} \right)}{\prod_{d = 1}^D \Gamma \left( \frac{\nu_{m - 1} + d - 1}{2} \right)}. \nonumber
\end{eqnarray}

\subsection{Bernoulli}

We use a Bernoulli likelihood, $\text{Bernoulli} (x| \theta)$, with a Beta prior distribution, $\text{Beta} (\theta | \alpha, \beta)$.
We use the following priors $(\alpha, \beta) = (\alpha_0, \beta_0) = (1, 1)$ for all experiments.
The densities are
\begin{eqnarray}
\text{Bernoulli} (x| \theta) & = & \theta^x (1 - \theta)^{1 - x},  \\
\text{Beta} (\theta | \alpha, \beta) & = & \frac{\Gamma (\alpha) \Gamma(\beta)}{\Gamma (\alpha + \beta)} \theta^{\alpha - 1} \theta^{\beta - 1}. \nonumber
\end{eqnarray}
The posterior density of $\theta$ given $\tmmathbf{y}= (y_1, \ldots, y_m)$ is $\text{Beta} (\alpha_m, \beta_m)$ where $\alpha_m = \alpha_0 + \sum_{i = 1}^m y_i$ and $\beta_m = \beta_0 + \sum_{i = 1}^m (1 - y_i)$.
The marginal likelihood is
\begin{eqnarray}
L (\tmmathbf{y}) & = & \int \prod_{i = 1}^m L (y_i | \theta) H (\mathd \theta)  \\
& = & \int \prod_{i = 1}^m \text{Bernoulli} (y_i | \theta)  \text{Beta}(\theta | \alpha_0, \beta_0) \ud\theta \nonumber \\
& = & \frac{\Gamma (\alpha) \Gamma (\beta)}{\Gamma (\alpha_m) \Gamma(\beta_m)}  \frac{\Gamma (\alpha_m + \beta_m)}{\Gamma (\alpha_0 + \beta_0)}, \nonumber
\end{eqnarray}
and the predictive log likelihood is
\begin{eqnarray}
L (\tmmathbf{y}^+ |\tmmathbf{y}^-) & = & \frac{L (y_1, \ldots, y_m)}{L(y_1, \ldots, y_{m - 1})}  \\
& = & \frac{\Gamma (\alpha_{m - 1}) \Gamma (\beta_{m - 1})}{\Gamma(\alpha_m) \Gamma (\beta_m)}  \frac{\Gamma (\alpha_m + \beta_m)}{\Gamma(\alpha_{m - 1} + \beta_{m - 1})}. \nonumber
\end{eqnarray}

\subsection{PyClone}

For the cancer genomics data we use the application-specific PyClone likelihood model over clonal prevalences, genotypes, and observed read counts.
The key variables in the model are as follows (see \cite{roth2014pyclone} for a more detailed description of the model):
\begin{eqnarray}
\phi_i &:& \text{proportion of cancer cells with mutation $i$, } \phi_i\in[0,1], \nonumber \\
t & : & \text{proportion of cancer cells in a sample (treated as known), } t \in  [0, 1], \nonumber \\
\psi_i & : & \text{genotype of normal, non-mutated cancer and mutated cancer cells, } \nonumber \psi_i \in (g_N, g_R, g_V), \nonumber \\
g_x & \in & \mathcal{G} = \left\{ \text{A}, \text{B}, \text{AA}, \text{AB}, \ldots \right\}, \nonumber \\
\pi_{i, \psi_i} & : & \text{probability that mutation $i$ has genotype $\psi_i$ (elicited from auxillary data)}, \nonumber \\
c (g_x) & = & \# \text{A} (g_x) +\# \text{B} (g_x), \nonumber \\
\mu (g_x) & = & \frac{\# \text{A} (g_x)}{c (g_x)}, \nonumber \\
\xi (\psi, \phi, t) & : & \text{probability of sampling a B from the population of cells in the sample, i.e.:} \nonumber \\
& = & \frac{(1 - t) c (g_N) \mu (g_N) + t (1 - \phi) c (g_R) \mu (g_R) + t \phi c (g_V) \mu (g_V) }{(1 - t) c (g_N) + t (1 - \phi) c (g_R) + t \phi c  (g_V)}, \nonumber \\
y_i & : & \text{number of sequence reads with a B and total number of reads covering mutation $i$, i.e.:} \nonumber \\
& = & (y_{i, b}, y_{i, d}) \in \mathbb{N}^2. \nonumber
\end{eqnarray}
The generative model is specified as follow:
\begin{eqnarray}
H_0 & = & \text{Uniform} ([0, 1]),  \\
\concentration & \sim & \text{Gamma} (\concentration |a, b), \nonumber \\
H| \concentration, H_0 & \sim & \text{DP} (H| \concentration, H_0), \nonumber \\
\phi_i |H & \sim & H, \nonumber \\
y_i | \psi_i, \phi_i, t & \sim &  \text{Binomial} (y_{i, b} |y_{i, d}, \xi(\psi_i, \phi, t)). \nonumber
\end{eqnarray}
This model is not conjugate. However, if we let $x \in \{ x_0, \ldots, x_M \} = \left\{ 0, \frac{1}{M - 1}, \ldots, \frac{M - 2}{M - 1}, 1 \right\}$ be a discretization of the interval $[0, 1]$ and replace the continuous uniform base measure, $H_0 = \text{Uniform} ([0, 1])$, with the discrete uniform measure, $H_0 = \text{Uniform} (\{ x_0, \ldots, x_M \})$, then we can approximate the model.
Using this approximation, we can now treat the model as if it were conjugate.
The marginal likelihood for data $(y_1,...,y_m)$ is given by
\begin{eqnarray}
\int \prod_{i = 1}^m L (y_i | \theta) H (\mathd \theta) & = & \int \prod_{i = 1}^m \sum_{\psi_i \in \mathcal{G}^3} \pi_{i, \psi_i}  \text{Binomial}(y_{i, b} |y_{i, d}, \xi (\psi_i, \phi_{}, t)) H (\mathd \phi_{})  \\
& = & \sum_{k = 0}^M \prod_{i = 1}^m \sum_{\psi_i \in \mathcal{G}^3} \pi_{i, \psi_i}  \text{Binomial} (y_{i, b} |y_{i, d}, \xi (\psi_i, x_k, t)) \frac{1}{M} \nonumber \\
& = & \sum_{k = 0}^M \prod_{i = 1}^m \exp \left( \underbrace{\log \sum_{\psi_i \in \mathcal{G}^3} \pi_{i, \psi_i}  \text{Binomial} (y_{i, b} |y_{i, d}, \xi (\psi_i, x_k, t)) }_{\Xi_k (y_i)} \right) \frac{1}{M} \nonumber \\
& = & \sum_{k = 0}^M \exp \left( \sum_{i = 1}^m \Xi_k (y_i) \right) \frac{1}{M}, \nonumber
\end{eqnarray}
where we have the sufficient statistics
\begin{eqnarray}
\tmmathbf{\Xi} (y_i) & = & (\Xi_0 (y_i), \ldots, \Xi_M (y_i)).
\end{eqnarray}
\begin{remark}
	The possibly infinite sum $\sum_{\psi_i \in \mathcal{G}^3}$ is truncated to a finite sum over biologically plausible states.
\end{remark}

\

\section{Anchor proposal distribution}\label{appendix:proposals}

The anchor proposal distribution, $h$, is a free tuning parameter for the PGSM sampler. 
In principle, proposals which are informed by the current clustering state of the chain or by the topology of the space may improve the performance of the sampler.

We consider two informed proposal distributions.
While bespoke proposals for each model may perform better, we restrict attention here to proposals which can be applied generically to any class of model for which the PGSM sampler is applicable.
In particular, we do not assume a distance metric is available.
Both proposals we discuss are only applicable when two anchor points are used.

\begin{algorithm}
	\caption{Cluster informed (CI) proposal}\label{alg:ci_proposal}
	\begin{algorithmic}[1]
		\State $i_{1} \sim \mbox{Uniform}([\T])$
		
		\State $\bar{b} \gets b \in c$ s.t. $i_{1} \in b$
		
		\State $c' \gets c \setminus \{\bar{b}\}$
		
		\For{$b \in c'$}
		\State $s_{b} \gets \frac{L(y_{\bar{b} \cup b})}{L(y_{\bar{b}}) L(y_{b})}$
		\EndFor
		
		\State $s_{\bar{b}} \gets \frac{\sum_{b \in c'} s_{b}}{|c| - 1}$ \Comment{Merge probability is set to $\frac{1}{|c| - 1}$}
		
		\For{$b \in c$}
		\State $p_{b} \gets \frac{s_{b}}{\sum_{b \in c} s_{b}}$ 
		\EndFor
		
		\State $b' \sim \mbox{Discrete}(c, p_{b})$ \Comment{Sample a block $b'$ in $c$ with probability $p_{b}$}
		
		\State $i_{2} \sim \mbox{Uniform}(b' \setminus {i_{1}})$
		
		\State \textbf{return} $i_{1}, i_{2}$
	\end{algorithmic}
\end{algorithm}

\begin{algorithm}
	\caption{Threshold informed (TI) proposal}\label{alg:ti_proposal}
	\begin{algorithmic}[1]
		\State $i_{1} \sim \mbox{Uniform}([\T])$
		
		\For{$b \in c$}
		\If{$i_1 \in b$}
		\State $b \gets b \setminus i_1$
		\EndIf
		\State $s_{b} \gets \tau_{2}(b) L(y_{i_{1}}|b)$ \Comment{CRP attachment probability where $L(\cdot|b)$ is the predictive distribution}
		\EndFor
		
		\For{$b \in c$}
		\State $p_{b} \gets \frac{s_{b}}{\sum_{b \in c} s_{b}}$ 
		\EndFor
		
		\State $b' \sim \mbox{Uniform}(\{b : p_{b} \ge t\})$ \Comment{$t$ is a pre-specified threshold, set to 0.01 in the experiments}
		
		\State $i_{2} \sim \mbox{Uniform}(b' \setminus {i_{1}})$
		
		\State \textbf{return} $i_{1}, i_{2}$
	\end{algorithmic}
\end{algorithm}

\begin{remark}
	If the any of the sets that we sample uniformly from are empty, we return two anchors sampled uniformly at random.
\end{remark}

\newpage
\renewcommand{\nomname}{List of Symbols}
{\footnotesize
\printnomenclature
}

\newpage
\bibliography{tex/poset_smc}

\end{document}